\documentclass[runningheads]{llncs}

\usepackage{url} 
\usepackage{graphicx}

\usepackage{mymacros}
\usepackage{graphicx}
\usepackage{booktabs}
\usepackage{algorithm}
\usepackage{algpseudocode}
\usepackage{amsmath}
\usepackage{amsfonts}
\usepackage{amssymb}
\usepackage[numbers]{natbib}
\usepackage{tikz-cd}

\newcommand{\riccardo}[1]{{#1}}
\newcommand{\cancella}[1]{}

\begin{document}
\setcounter{page}{1}

\title{Exact Persistent Stochastic Non-Interference}



\author{Carla Piazza\inst{1}\orcidID{0000-1111-2222-3333} \and
Riccardo Romanello\inst{2}\orcidID{1111-2222-3333-4444} \and
Sabina Rossi\inst{2}\orcidID{2222--3333-4444-5555}}
\authorrunning{C. Piazza et al.}
%
\institute{
Universit\`a degli Studi di Udine, Italy \\
\email{carla.piazza@uniud.it} \and  
Universit\`a Ca' Foscari Venezia, Italy \\
\email{\{riccardo.romanello,sabina.rossi\}@unive.it}}

\maketitle


\begin{abstract}
Persistent Stochastic Non-Interference (PSNI) was introduced to capture a quantitative security property in stochastic process algebras, ensuring that a high-level process does not influence the observable behaviour of a low-level component, as formalised via lumpable bisimulation. In this work, we revisit PSNI from a performance-oriented perspective and propose a new characterisation based on a refined behavioural relation. We introduce \emph{weak-exact equivalence}, which extends exact equivalence with a relaxed treatment of internal (\(\tau\)) actions, enabling precise control over quantitative observables while accommodating unobservable transitions. Based on this, we define \emph{Exact PSNI} (EPSNI), a variant of PSNI characterised via weak-exact equivalence. We show that EPSNI admits the same bisimulation-based and unwinding-style characterisations as PSNI, and enjoys analogous compositionality properties. These results confirm weak-exact equivalence as a robust foundation for reasoning about non-interference in stochastic systems.

\keywords{Process Algebra \and Non-Interference \and Stochastic models \and Exact Equivalence}

\end{abstract}

\section{Introduction}
\label{sec:intro}

Formal notions of non-interference have long been central to the specification and analysis of information flow properties in secure systems. 
Traditionally studied in deterministic or nondeterministic process models, non-interference ensures that high-level (secret) actions do not affect the observable behaviour of low-level (public) components. 
More recently, attention has turned to quantitative settings, where systems are described using stochastic process algebras, and the need arises to reason not only about qualitative properties but also about the performance and probabilistic characteristics of a system’s execution.
In this context, the recent work of Hillston, Piazza, and Rossi~\cite{PSNI_strong} introduced the notion of \emph{Persistent Stochastic Non-Interference} (PSNI), a security property formulated within the PEPA process algebra~\cite{hillston:book}. 
PSNI leverages the theory of lumpability for Markov chains to define a compositional and performance-aware form of non-interference. Specifically, PSNI ensures that the behaviour of a low-level process remains indistinguishable to an external observer regardless of the high-level environment in which it is placed. The notion is persistent in that it requires this indistinguishability to hold at every reachable state, not just at the initial configuration. PSNI admits multiple characterisations, including a bisimulation-based formulation and a set of unwinding-style local conditions, and it has been shown to be preserved under key operations of the PEPA calculus.
Building on this foundation, the aim of this work is to re-express PSNI using a more fine-grained behavioural equivalence that more directly captures the quantitative structure of stochastic systems. Our starting point is the classical notion of \emph{exact equivalence}, which requires two processes to exhibit the same apparent rates and to behave identically in all contexts. Exact equivalence provides a strong notion of behavioural indistinguishability, particularly suited to performance modeling, as it ensures that two processes can be interchanged without affecting any quantitative measure derived from the underlying Markov process. 
It thus seemed a natural candidate for reformulating the PSNI property in a more precise way.
However, applying exact equivalence directly to the setting of non-interference raises a subtle challenge. 
In particular, the presence of internal (\(\tau\)) actions—which are intended to be unobservable—clashes with the strict matching requirements of exact equivalence. When internal transitions differ across components, exact equivalence is too rigid to identify them, even when their observable behaviour is identical. 
As a consequence, exact equivalence fails to support the abstraction necessary for reasoning about security properties in systems where internal actions encode confidential or implementation-specific behaviour.
To overcome this limitation, we introduce a new equivalence relation, which we call \emph{weak-exact equivalence}. This relation extends exact equivalence by relaxing its requirements for \(\tau\)-transitions. 
While it continues to enforce matching of apparent rates and incoming transition rates for visible actions, it allows for a controlled discrepancy in internal behaviour, provided that such differences do not affect observability at the level of equivalence classes. 
The key intuition is that weak-exact equivalence preserves the precise quantitative structure of a system, while abstracting away from internal transitions in a way that remains compatible with security analysis.
With this new equivalence in place, we define a corresponding security property, \emph{Exact Persistent Stochastic Non-Interference} (EPSNI), which mirrors the structure of PSNI but uses weak-exact equivalence as its semantic foundation. 
We show that EPSNI admits two out of the three characterisations of PSNI: a semantic definition quantifying over high-level contexts and a bisimulation-based reformulation that avoids universal quantification. 
We further prove that EPSNI is preserved under PEPA operators such as prefix and hiding on low-level actions, and that it supports a notion of steady-state independence between high-level behaviour and low-level observations.
These results demonstrate that EPSNI provides a robust and expressive framework for reasoning about non-interference in stochastic systems, with strong guarantees on both performance preservation and security. Compared to PSNI, EPSNI offers a stricter form of behavioural indistinguishability, capturing finer-grained aspects of process interaction while maintaining compatibility with internal abstraction. At the same time, the similarity of its structural and compositional properties confirms that weak-exact equivalence is well-suited as a semantic foundation for quantitative security analysis in stochastic models.

\emph{Structure of the paper.} 
In Section \ref{sec:calculus}, we introduce the process algebra PEPA, its structural operational semantics, and the observation equivalence named \emph{weak-exact equivalence}. The notion of \emph{Exact Persistent Stochastic Non-Interference ($\mathit{EPSNI}$)} and its characterizations are presented in Section \ref{sec:psni}. In Section \ref{sec:compositionality}, we prove some compositionality results and other properties of $\mathit{EPSNI}$.
Section \ref{sec:conclusion} concludes the paper.

\section{The Calculus}\label{sec:calculus}
PEPA (Performance Evaluation Process Algebra)  \cite{hillston:book} is an
algebraic calculus  enhanced with stochastic timing infor\-mation
 which may~be used to~calculate performance measures 
as well as prove functional system~properties. 

The basic elements of PEPA are \emph{components} and \emph{activities}.
Each activity is represented by a pair $(\alpha, r)$ where $\alpha$ is a label, or \emph{action type}, 
and $r$ is its  \emph{activity rate},  that is the parameter of a negative
exponential distribution determining its duration. 
We assume that
there is a countable
 set, $\cA$,  of possible action types, including a distinguished
type, $\tau$,
which can be regarded as the \emph{unobservable} type.
Activity rates may be any positive real number, or the distinguished symbol
$\top$ which should be read as \emph{unspecified}.
The syntax for  PEPA terms is defined by the  grammar:

$$\begin{array}{cclccl}
N & ::= & (\alpha, r).N \mid N+N \mid X\\[1mm]
P & ::= & N \mid P/A \mid P \sync{A} P
\end{array}$$
where $N$ denotes a \emph{sequential component}, $X$ is a constant from
a countable set $\mathcal X$, $A\subset \mathcal A$ is a set of actions,
and $P$ denotes a
\emph{model component}. 
We write $\cC$ for the set of all possible
components.\\[-3mm]

\begin{table*}[t]
	\begin{center}
	\begin{tabular}{ccc}
		\toprule
		& \\
 \multicolumn{3}{c}
{$\dfrac{}{(\alpha,r).N \transits{(\alpha,r)} N}$ \qquad $\dfrac{N \transits{(\alpha,r)} N'}{X \transits{(\alpha,r)} N'}$  $(X\rmdef N)$}\\[8mm]
  \multicolumn{3}{c}
 {
$\dfrac{N \transits{(\alpha,r)} N'}{N+M \transits{(\alpha,r)} N'}$ \qquad
  $\dfrac{M \transits{(\alpha,r)} M'}{N+M \transits{(\alpha,r)} M'}$}
  \\[8mm]
  \multicolumn{3}{c}
 { $\dfrac{P \transits{(\alpha,r)} P'}{P/A \transits{(\alpha,r)} P'/A}$  $(\alpha  \not \in A)$ \qquad
$\dfrac{P \transits{(\alpha,r)} P'}{P/A \transits{(\tau,r)} P'/A}$  $(\alpha  \in A)$ }
\\[8mm]
\multicolumn{3}{c}
{ $\dfrac{P \transits{(\alpha,r)} P'}{P \sync{A} Q \transits{(\alpha,r)} P'\sync{A} Q}$  $(\alpha \not \in A)$ \qquad
 $\dfrac{Q \transits{(\alpha,r)} Q'}{P \sync{A} Q \transits{(\alpha,r)} P\sync{A} Q'}$ $(\alpha \not \in A)$}\\[8mm]
 \multicolumn{3}{c}
{ $\dfrac{P \transits{(\alpha,r_1)} P' \ \ Q \transits{(\alpha,r_2)} Q'}{P\sync{A} Q \transits{(\alpha,R)} P'\sync{A} Q'}$ \ \ \   $R =  \dfrac{r_1}{r_{\alpha}(P)} \dfrac{r_2}{r_{\alpha}(Q)} \ \mathrm{min}(r_{\alpha}(P),r_{\alpha}(Q))$ \ $(\alpha  \in A)$}

\\
	\bottomrule 
\end{tabular}
\end{center}
\caption{Operational semantics for PEPA components}
\label{table:rules}
\end{table*}

\noindent
\emph{Structural Operational Semantics.} 
Table \ref{table:rules} gives the structural operational semantics for PEPA.

Component $(\alpha, r).N$ carries out the activity $(\alpha, r)$
of type  $\alpha$ at rate $r$ and subsequently behaves as  $N$. When $a=(\alpha, r)$, component $(\alpha, r).N$  may be written as $a.N$.

Component $N+M$ represents a system which may behave either as  $N$ or as  $M$. $N+M$ enables all the current activities of both $N$ and $M$. The first activity to complete decides whether $N$ or $M$ will proceed. The other component of the choice is discarded. 

The meaning of a constant $X$ is given by a defining equation such as 
$X\rmdef N$ which gives the constant $X$ the behaviour of the component~$N$. 
For each constant exactly one defining equation has to be specified.
We require that 
constant definitions are always guarded (e.g., a definition of the form $X\rmdef Y + (\alpha,r).Z$ is not allowed since $Y$ is not guarded). 
Component $P/A$ behaves as $P$ except that any activity of type within the set $A$ is \emph{hidden}, i.e., they are relabeled with the unobservable type $\tau$. 
The cooperation combinator $\sync{A}$ is in fact an indexed family of combinators, one
for each possible set of action types, $A\subseteq \cA\setminus \{\tau\}$.
The \emph{cooperation set} $A$  defines the action types on which the components must synchronize or \emph{cooperate} (the unobservable action type, $\tau$, may not appear in any cooperation set).
It is assumed that each component proceeds independently with any activity whose type does not occur
in the cooperation set $A$ (\emph{individual activities}). However, activities with action types in the set $A$ require
the simultaneous involvement of both components (\emph{shared activities}). These shared activities will only
be enabled in $P\sync{A}Q$ when they are enabled in both $P$ and $Q$. 

The shared activity will have the same action type as the two contributing activities and a rate
reflecting the rate of the slower participant \cite{hillston:book}.

If an activity has an unspecified rate in a sequential component, then the
component is passive~with respect to that action type and all occurrences of the activity in 
the sequential component have to be unspecified. In this case, 
the rate of 
the shared activity will be completely determined by the
other component.
For a given  process $P$ and action  type $\alpha$, 
the \emph{apparent rate}  of 
$\alpha$ in $P$,
denoted $r_{\alpha}(P)$  is the sum of the rates of the 
$\alpha$  activities enabled in~$P$.

Both in the last rule of Table \ref{table:rules} and in the definition of apparent rate, 
when operations involving $\top$ are concerned, we refer to the rules defined
in \cite{hillston:book} Section 3.3.5, stating for instance that $r<\top$ for any positive real number 
$r$.

The semantics of each
term in PEPA is given via a labeled \emph{multi-transition system} where the 
multiplicities of arcs are significant. In the transition system, a
state or
\emph{derivative} corresponds to each syntactic term of the language 
and an arc represents the activity which causes
one derivative to evolve into another. The  set of
reachable states of a model $P$ is termed the \emph{derivative set} of $P$, denoted by $ds(P)$,
 and
constitutes the set of nodes of the \emph{derivation graph} of $P$ ($\cD(P)$) obtained
by applying the semantic rules exhaustively.

Since recursive definitions are limited to sequential guarded components, each PEPA component $P$ has a derivation graph $\cD(P)$ with a finite number of states and edges. In \cite{CCSfinite1} the property is proved for a fragment of CCS and the same arguments apply here.

We denote by $\cA(P)$ the set of all the \emph{current action types} of $P$, i.e.,
 the set of action types which the component $P$ may next engage in.
We denote by $\Ac(P)$ the multiset of all the \emph{current activities} of $P$. 

Finally, we denote by ${\vec{\cal A}(P)}$ the union of all ${\cA{(P')}}$ with $P'\in ds(P)$, i.e., the set of all action types syntactically occurring in $P$ and in the constants involved in the definition of $P$.
For any component $P$, the \emph{exit rate} from $P$ will be the sum of the 
activity rates of all the activities
enabled in $P$, i.e., $ q(P) = \sum_{a \in \Ac(P)}r_a$, with
$r_a$ being the rate of activity $a$. 
If $P$ enables more than one activity, $|\Ac(P)|>1$, then the
dynamic behaviour of the model is determined by a race condition. 
This has the effect of replacing the
nondeterministic branching of the pure process algebra with probabilistic 
branching. The probability
that a particular activity completes is given by the ratio of the activity 
rate to the exit rate from $P$.\\[-3mm]

\noindent
\emph{Underlying Stochastic Process.} 
We say that a PEPA model $P$ is \emph{complete} if all the exit rates of the derivatives of $P$ are real positive numbers.
Intuitively, if $P$ is complete and there is a passive action occurring in one of its sub-components, then
it is under the scope of a cooperation which either restricts the action or forces the synchronization with an active occurrence of the same action.
In \cite{hillston:book} Section 3.5 it is proved that
for any finite complete PEPA model $P$ with $ds(P)=\{P_0,\ldots,P_n\}$, if we define the stochastic process 
$X(t)$, such that $X(t)=P_i$ indicates that the system behaves as component 
$P_i$ at time t, then $X(t)$ is a continuous time
Markov chain.
The \emph{transition rate} between two components $P_i$ and $P_j$, denoted 
 $q(P_i,P_j)$,~is the rate at
which the system changes from behaving as component $P_i$ to behaving as $P_j$. 
It is the sum of the
activity rates labeling arcs which connect the node cor\-responding to $P_i$ 
to the node corresponding to $P_j$ in $\cD(P)$,~i.e., 
$ q(P_i,P_j)= \sum_{a\in \Ac(P_i| P_j)} r_a$
where $P_i\not = P_j$ and $\Ac(P_i| P_j) = \bms a\in \Ac(P_i) |\ P_i \transits{a} P_j\ems$. 
Clearly, if $P_j$ is not a one-step derivative of $P_i$, $q(P_i,P_j)=0$.
The $q(P_i,P_j)$ (also denoted $q_{ij}$), are the off-diagonal elements of the 
infinitesimal generator matrix of the
Markov process, ${\bf Q}$. Diagonal elements are formed as the negative sum of 
the non-diagonal elements of
each row. We use the following notation: $q(P_i)=\sum_{j\neq i}q(P_i,P_j)$ and
 $q_{ii}=-q(P_i)$. 
For any finite and irreducible PEPA model $P$, the steady-state distribution 
$\mathrm{\Pi}(\cdot)$ exists and it may be found by solving the normalization equation and the global
balance equations:
\[
\sum_{P_i\in ds(P)}\mathrm{\Pi}(P_i)=1 
\quad \land \quad
  \mathrm{\Pi} \, \mathbf{Q} =\mathbf{0}.
\]

The \emph{conditional transition rate} from $P_i$ to $P_j$ via an action type 
$\alpha$ is denoted $q(P_i,P_j,\alpha)$. This is
the sum of the activity rates labeling arcs connecting the corresponding 
nodes in the derivation graph with label
 $\alpha$, i.e., $q(P_i,P_j,\alpha)=\sum_{P_i \xrightarrow{(\alpha,r_{\alpha})} P_j} r_{\alpha}$.
 It is the rate at which a system behaving as component $P_i$
evolves to behaving as component $P_j$ as the result of completing a 
type $\alpha$ activity. 
The notions of both transition rate and conditional transition rate can 
be extended to any process $Q\not \in ds(P)$ by defining $q(P_i,Q)=q(P_i,Q,\alpha)=0$.

The \emph{total conditional transition rate} from $P$ to $S\subseteq \mathcal C$,
denoted $q[P,S,\alpha]$, is defined as
$
q[P,S,\alpha]=\sum_{P'\in S} q(P,P',\alpha)$.
\riccardo{Symmetrically, we define the transition from $S \subseteq \mathcal{C}$ to $P$ as $q[S, P, \alpha] = \sum_{P'\in S} q(P',P,\alpha)$}. 
Moreover, we define $q[P, a] = \sum_{P' \in \mathcal{C}} q(P, P', \alpha)$ as the sum of the total outgoing rates from state $P$ to any component of the system.

For further explanation and examples on PEPA we refer the reader to \cite{hillston:book}.

\noindent
\emph{Observation Equivalence.} 

\riccardo{
n~\cite{FG95}, the notion of Stochastic Non-Interference was introduced as a security property for process algebras. This property ensures that secret (high-level) actions do not influence the observable behaviour of a system in a probabilistic setting. Later, in~\cite{PSNI_strong}, the concept of Persistency was incorporated, requiring that the non-interference property hold not only initially but at all reachable states of the system. The resulting notion was termed Persistent Stochastic Non-Interference (PSNI).

The key idea behind these properties is the inability of an observer to detect any information flow through protected (i.e., confidential) actions. To formalise this behaviour, the authors of~\cite{PSNI_strong} adopted an equivalence relation known as lumpable bisimilarity~\cite{marin:valuetools13}, defined over PEPA components. This relation was chosen because it induces a strong equivalence over the underlying Continuous-Time Markov Chain (CTMC), preserving both behavioural and performance properties.

In this work, we revisit the approach proposed in~\cite{PSNI_strong}, shifting the focus from strong equivalence to exact equivalence~\cite{franceschinis:bounds}. While both notions are defined over CTMCs, their perspectives differ: strong equivalence compares the rates leaving a state, whereas exact equivalence focuses on the rates entering a state. Our goal is to reformulate PSNI using this more precise, incoming-rate–oriented view of process behaviour.

Since exact equivalence is originally defined at the CTMC level, a key step in our development is to lift this notion to the level of PEPA components. 

Two PEPA components are said to be \emph{exact equivalent} if there is an equivalence relation between them such that,
for any action type $\alpha$ the total conditional transition rates from any equivalence class to those components, via activities of this type, are the same.
Moreover, it must hold that the total outgoing rates are the same. 

\begin{definition}\emph{(Exact equivalence)}
    \label{def::exact_equiv}
    An equivalence relation over PEPA components, $\cR\subseteq \cC\times \cC$ is an \emph{exact equivalence} if whenever $(P, Q) \in \cR$ then for all $\alpha\in \cA$ and for all  $S\in \cC/\cR$ it holds: 
    \begin{itemize}
       \item $q[P, \alpha] = q[Q, \alpha] $
        \item $q[S, P, a] = q[S, Q, a]$
    \end{itemize}
    
\end{definition}
}

\riccardo{It is clear that the identity relation is an exact equivalence. We are interested in the relation which is the largest exact equivalence, formed by the union of all exact equivalences. }

\riccardo{
It was proven in \cite{inf18} that the transitive closure of a union of exact equivalences is still an exact equivalence. Hence, the maximal exact equivalence, denoted $\sim_{e}$, is defined as the
union of all exact equivalences and it is an equivalence relation.}
\riccardo{
\section{Weak-Exact Equivalence}

The equivalence relation we just introduced does not make any difference between $\tau$ and non-$\tau$ actions. 
However, in \cite{PSNI_strong}, the very first step was to tweak Strong equivalence in order to behave differently with the unobservable actions. 

For this reason, lumpable bisimulation was introduced. 

\begin{definition}\emph{(Lumpable bisimulation)}
  \label{def:bisimulation}
  An equivalence relation over PEPA components, $\cR\subseteq \cC\times \cC$, is a \emph{lumpable bisimulation} if whenever $(P,Q)\in \cR$ then for  all $\alpha\in \cA$ and for all  $S\in \cC/\cR$ such that
  \begin{itemize}
  \item either $\alpha\not = \tau$,
  \item   or $\alpha=\tau$ and $P,Q\not \in S$,
  \end{itemize}
  it holds
  $$q[P,S,\alpha]=q[Q,S,\alpha]\, .$$
\end{definition}

Since in this work we want to approach the very same problem but starting from the notion of Exact Equivalence we stated above in Def. \ref{def::exact_equiv}, 
it is straightforward to see that the equivalence must be tweaked to take into account the behaviour of $\tau$-labelled actions, mimicking what 
Def. \ref{def:bisimulation} does with Strong Equivalence \cite{hillston:book}.

Therefore, we introduce the following notion which we call \emph{weak} exact equivalence.

\begin{definition}[weak-exact equivalence]
    An equivalence relation over PEPA components, $\cR\subseteq \cC\times \cC$ is a weak-exact equivalence if whenever $(P, Q) \in \cR$ then for all $\alpha \in \mathcal{A}$:
    \begin{itemize}
        \item either $\alpha \neq \tau$ and: 
        \begin{itemize}
            \item $q[P, \alpha] = q[Q, \alpha]$
            \item $\forall S \in \mathcal{C}/\mathcal{R} \,\,\,\,\,\, q[S, P, \alpha] = q[S, Q, \alpha]$
        \end{itemize}
        \item or $\alpha = \tau$ and:
        \begin{itemize}
            \item $\forall S \in \mathcal{C} / \mathcal{R}, P, Q \notin S \,\,\,\,\,\, q[S, P, \tau] = q[S, Q, \tau]$
            \item $P, Q \in S \,\,\,\,\,\, q[S, P, \tau] - q[P, \tau] = q[S, Q, \tau] - q[Q, \tau]$
        \end{itemize}
    \end{itemize}
\end{definition}

While this new relation may seem counterintuitive, its properties have been derived straightforwardly from the sufficient conditions 
for the exact equivalence stated in \cite{franceschinis:lumping.peva}.

The new relation takes into account both the conditions involved in exact equivalence. However, 
it differently deals with $\tau$-transitions as well.

\begin{definition}\emph{(Weak-Exact equivalence)}
\label{def:bisimilarity}
Two PEPA components $P$ and $Q$ are \emph{weak-exact equivalent}, 
written 
$P\approx_{\text{we}}Q$, if $(P,Q)\in\cR$ for some weak-exact equivalence $\cR$,~i.e.,
$$\approx_{\text{we}} \ =\bigcup \ \{\cR\ |\ \cR \mbox{ is a weak-exact equivalence}\}.$$
\noindent
$\approx_{\text{we}}$ is called \emph{weak-exact equivalence} and it is the largest symmetric weak-exact equivalence over PEPA components.
\end{definition}

To the best of our knowledge, this is the very first time such notion is adopted in the literature. 
Therefore, it is fundamental to prove its compositional properties. 

To obtain such endeavour, we first define the notion of equivalence \emph{up to}. 
Our goal is to mimic some techniques adopted in past works like \cite{hillston:thesis}.

\begin{definition}
  \label{def::up-to}
  $\mathcal{R}$ is a weak-exact equivalence up to $\approx_{\text{we}}$ if $\mathcal{R}$ is an equivalence relation over $\mathcal{C}$ and $(P, Q) \in \mathcal{R}$
  implies that for all $\alpha \in \mathcal{A}$:
  \begin{itemize}
    \item either $\alpha \neq \tau$ and: 
        \begin{itemize}
            \item $q[P, \alpha] = q[Q, \alpha]$
            \item $\forall T \in \mathcal{C}/(\approx_{\text{we}}\mathcal{R}\approx_{\text{we}}) \,\,\,\,\,\, q[T, P, \alpha] = q[T, Q, \alpha]$
        \end{itemize}
    \item or $\alpha = \tau$ and:
        \begin{itemize}
            \item $\forall T \in \mathcal{C} / (\approx_{\text{we}}\mathcal{R}\approx_{\text{we}}), P, Q \notin T \,\,\,\,\,\, q[T, P, \tau] = q[T, Q, \tau]$
            \item $P, Q \in T \,\,\,\,\,\, q[T, P, \tau] - q[P, \tau] = q[T, Q, \tau] - q[Q, \tau]$
        \end{itemize}
  \end{itemize}
\end{definition}

We recall that $P \approx_{\text{we}}\mathcal{R}\approx_{\text{we}} Q$ if and only if there exist $P_1, Q_1$ such that $P \approx_{\text{we}} P1$, 
$P1 \mathcal{R} Q_1$, and $Q_1 \approx_{\text{we}} Q$.
In some sense, we are considering different equivalence classes induced by $\approx_{\text{we}}$ and joining them via $\mathcal{R}$. 
Because of this very reason, the classes induced by $\approx_{\text{we}}\mathcal{R}\approx_{\text{we}}$ are actually union of classes induced by $\approx_{\text{we}}$.

The above definition turns out to be really useful in proving properties of weak-exact equivalence by looking at smaller relations. 
Such behaviour is proved through the following lemma.

\begin{lemma}
  \label{lemma::up-to}
  If $\mathcal{R}$ is a weak-exact equivalence up to $\approx_{\text{we}}$, then the relation $\approx_{\text{we}} \mathcal{R} \approx_{\text{we}}$ is a weak-exact equivalence. 
\end{lemma}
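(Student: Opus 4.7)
The plan is to set $\mathcal{S} \stackrel{\text{def}}{=} \approx_{\text{we}}\mathcal{R}\approx_{\text{we}}$ and verify it against the weak-exact clauses, reading them with respect to its own quotient $\mathcal{C}/\mathcal{S}$. First I would check that $\mathcal{S}$ is an equivalence relation; reflexivity and symmetry are immediate from those of $\approx_{\text{we}}$ and $\mathcal{R}$, while transitivity is handled by fusing a double instance $P\approx_{\text{we}}P_1\mathrel{\mathcal{R}}Q_1\approx_{\text{we}}Q\approx_{\text{we}}Q_2\mathrel{\mathcal{R}}R_1\approx_{\text{we}}R$ using transitivity of $\approx_{\text{we}}$ on the middle $Q_1\approx_{\text{we}}Q_2$ link and then collapsing the remaining $\mathcal{R}\approx_{\text{we}}\mathcal{R}$ block (passing to the equivalence closure if required, consistent with the paper's remark that $\mathcal{S}$-classes are unions of $\approx_{\text{we}}$-classes). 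The key structural observation, on which everything else rests, is that since $\approx_{\text{we}}\subseteq\mathcal{S}$, every $T\in\mathcal{C}/\mathcal{S}$ decomposes as a disjoint union $T=\bigsqcup_i T_i$ of $\approx_{\text{we}}$-classes, so any rate sum over $T$ splits additively across the $T_i$.

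With this decomposition I would fix $(P,Q)\in\mathcal{S}$ with witnesses $P_1,Q_1$ and chain equalities along $P\rightsquigarrow P_1\rightsquigarrow Q_1\rightsquigarrow Q$. For $\alpha\neq\tau$ and $T\in\mathcal{C}/\mathcal{S}$, apparent rates match via $q[P,\alpha]=q[P_1,\alpha]=q[Q_1,\alpha]=q[Q,\alpha]$, using $\approx_{\text{we}}$ at the outer links and Definition~\ref{def::up-to} in the middle. For incoming rates, each $T_i$ being a $\approx_{\text{we}}$-class yields $q[T_i,P,\alpha]=q[T_i,P_1,\alpha]$; summing over $i$ gives $q[T,P,\alpha]=q[T,P_1,\alpha]$, Definition~\ref{def::up-to} applied to the $\mathcal{S}$-class $T$ supplies $q[T,P_1,\alpha]=q[T,Q_1,\alpha]$, and a symmetric $\approx_{\text{we}}$-step closes the chain.

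The main obstacle is the $\alpha=\tau$ case with $P,Q\in T$, where the self-class carries a correction term and the outgoing $\tau$-rate is not preserved by $\approx_{\text{we}}$ on its own. I would let $T_0$ be the $\approx_{\text{we}}$-class of $P$ inside $T$; the $\tau$-clause of $\approx_{\text{we}}$ then gives $q[T_i,P,\tau]=q[T_i,P_1,\tau]$ for $T_i\neq T_0$ and $q[T_0,P,\tau]-q[P,\tau]=q[T_0,P_1,\tau]-q[P_1,\tau]$ for the self-class. Summing over $i$ telescopes to
\[
q[T,P,\tau]-q[P,\tau]=q[T,P_1,\tau]-q[P_1,\tau],
\]
and the symmetric identity links $Q_1$ to $Q$. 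The up-to hypothesis, applied to $P_1\mathrel{\mathcal{R}}Q_1$ with both inside the single $\mathcal{S}$-class $T$, supplies $q[T,P_1,\tau]-q[P_1,\tau]=q[T,Q_1,\tau]-q[Q_1,\tau]$, and chaining the three differences delivers the required identity. The case $P,Q\notin T$ is a strict simplification: the correction term disappears and each $T_i$-summand transfers cleanly across the three links, again by $\approx_{\text{we}}$, then Definition~\ref{def::up-to}, then $\approx_{\text{we}}$.
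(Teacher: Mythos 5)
Your proof is correct and follows exactly the route the paper signposts for this lemma (whose proof the paper itself omits): decompose each $\approx_{\text{we}}\mathcal{R}\approx_{\text{we}}$-class into constituent $\approx_{\text{we}}$-classes, transfer the rate equalities link by link along $P\approx_{\text{we}}P_1\mathrel{\mathcal{R}}Q_1\approx_{\text{we}}Q$, and telescope the $\tau$-correction terms over the self-class. The only step worth making fully explicit is that $P\in T$ if and only if $P_1\in T$ (and likewise for $Q$, $Q_1$), which holds because each $T$ is a union of $\approx_{\text{we}}$-classes; this is what licenses choosing the correct $\tau$-clause at every link of the chain.
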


The up to property turns out to be fundamental when dealing with compositionality of weak-exact equivalence. 
\begin{proposition}
  \label{prop::compositionality}
  If $P_1 \approx_{\text{we}} P_2$ then $P_1 \setminus L \approx_{\text{we}} P_2 \setminus L$.
\end{proposition}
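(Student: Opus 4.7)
The plan is to exhibit an explicit equivalence relation $\mathcal{R}$ on $\cC$ that contains $(P_1/L, P_2/L)$ and is a weak-exact equivalence in the sense of Definition~\ref{def:bisimilarity}. The natural candidate is
\[
\mathcal{R} \;=\; \{(R/L, R'/L) : R \approx_{\text{we}} R'\} \;\cup\; \{(A,A) : A \in \cC\},
\]
which is automatically an equivalence relation because $\approx_{\text{we}}$ is. Its nontrivial classes are exactly $T_S = \{R/L : R \in S\}$, one per $\approx_{\text{we}}$-class $S$ of $\cC$, so there is a clean bijection between the $\mathcal{R}$-classes visible from $P/L$ and the $\approx_{\text{we}}$-classes visible from $P$. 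Checking the clauses of Definition~\ref{def:bisimilarity} at a generic pair $(P/L, Q/L) \in \mathcal{R}$ with $P \approx_{\text{we}} Q$ then reduces to accounting for the two hiding rules in Table~\ref{table:rules}.

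Those rules yield
\[
q[P/L, \alpha] \;=\; \begin{cases} q[P, \alpha] & \alpha \notin L \cup \{\tau\}, \\ 0 & \alpha \in L \setminus \{\tau\}, \\ q[P, \tau] + \sum_{\beta \in L} q[P, \beta] & \alpha = \tau, \end{cases}
\]
together with the analogous decomposition $q[T_S, P/L, \alpha]$ into $q[S, P, \alpha]$-style terms (singleton classes contribute nothing, since only components of the form $R/L$ can transition into $P/L$). Given these identities, the apparent-rate clause and the non-$\tau$ incoming-rate clause transfer directly from $\approx_{\text{we}}$ on $(P,Q)$: for $\alpha \notin L \cup \{\tau\}$ both sides equal $q[S,P,\alpha] = q[S,Q,\alpha]$, and for $\alpha \in L$ both sides are $0$. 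The external-class $\tau$ clause (when $P, Q \notin S$) follows by adding the $\tau$-incoming-rate equality from $\approx_{\text{we}}$ to the $\beta$-incoming-rate equalities for each $\beta \in L$.

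The only delicate step, and the expected obstacle, is the self-class $\tau$ clause with $P/L, Q/L \in T_S$. Substituting the decompositions rewrites
\[
q[T_S, P/L, \tau] - q[P/L, \tau] \;=\; \bigl(q[S, P, \tau] - q[P, \tau]\bigr) + \sum_{\beta \in L} \bigl(q[S, P, \beta] - q[P, \beta]\bigr),
\]
and the mirror identity at $Q$. The first summand matches between $P$ and $Q$ by the self-class $\tau$ clause of $\approx_{\text{we}}$, and each $\beta$-summand matches because the non-$\tau$ part of $\approx_{\text{we}}$ supplies both $q[S,P,\beta] = q[S,Q,\beta]$ and $q[P,\beta] = q[Q,\beta]$. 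The whole difficulty is in recognising that each hidden $\beta$-transition contributes both to the incoming $\tau$-rate from $T_S$ and to the outgoing $\tau$-rate from $P/L$, so that the discrepancy telescopes into exactly the invariants that $\approx_{\text{we}}$ on the unhidden components preserves. With this in place, $\mathcal{R}$ is a weak-exact equivalence, hence $P_1/L \approx_{\text{we}} P_2/L$. (Alternatively, one could skip forming the full $\mathcal{R}$ and verify only that the set $\{(R/L, R'/L) : R \approx_{\text{we}} R'\}$ is a weak-exact equivalence up to $\approx_{\text{we}}$, then invoke Lemma~\ref{lemma::up-to}; the calculations are essentially the same.)
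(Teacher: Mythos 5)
Your argument is correct, and it takes the direct route: you exhibit the full candidate relation $\mathcal{R}=\{(R/L,R'/L): R\approx_{\text{we}}R'\}\cup\{(A,A): A\in\mathcal{C}\}$ and verify every clause of the definition outright, whereas the paper positions Lemma~\ref{lemma::up-to} immediately before this proposition precisely so that compositionality can be obtained by checking only the smaller relation $\{(R/L,R'/L): R\approx_{\text{we}}R'\}$ up to $\approx_{\text{we}}$ --- the alternative you flag in your closing parenthesis. The two routes are essentially interchangeable here, because the $\mathcal{R}$-classes that can reach $P/L$ are exactly the images $T_S$ of $\approx_{\text{we}}$-classes $S$, so closing up to $\approx_{\text{we}}$ buys nothing extra in this particular instance. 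Your key computation is the right one and is carried out correctly: a hidden $\beta\in L$ transition contributes $q[S,P,\beta]$ to the incoming $\tau$-rate from $T_S$ and $q[P,\beta]$ to the outgoing $\tau$-rate of $P/L$, and both quantities are invariants of $\approx_{\text{we}}$ for $\beta\neq\tau$ (the former via the class clause, the latter via the apparent-rate clause), so the self-class difference telescopes into summands each of which is matched by $Q$. Your observation that only terms of the form $R/L$ can transition into $P/L$ (so singleton classes contribute nothing to incoming rates) is also needed and is justified, since prefixes guard only sequential components and every operational rule preserves the top-level operator.

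One caveat: the paper later fixes $P\setminus L$ as notation for the restriction $(P\sync{L}{\bf 0})/L$, which \emph{deletes} $L$-transitions rather than relabelling them to $\tau$, and it treats restriction and hiding as distinct operators (cf.\ the two separate items of Lemma~\ref{new1}). You have proved the hiding case $P_1/L\approx_{\text{we}}P_2/L$. That is the harder of the two: for restriction the $\beta\in L$ summands simply vanish from both $q[\,\cdot\,,\tau]$ and $q[T_S,\cdot\,,\tau]$, and every clause transfers verbatim through the analogous relation on terms of the form $(R\sync{L}{\bf 0})/L$. So your argument covers the intended statement once the notational mismatch is resolved, but you should state explicitly which operator you are treating, and add the (one-line) restriction case if $\setminus$ is indeed meant as in Section~\ref{sec:psni}.
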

}

\section{Exact Persistent Stochastic Non-Interference}\label{sec:psni}

The security property named \emph{Persistent Stochastic Non-Interference ($\mathit{PSNI}$)} was introduced in~\cite{PSNI_strong}. 
Informally, it aims to capture every possible information flow from a \emph{classified (high)} level of confidentiality to an \emph{untrusted (low)} one.
The property was derived from the classical notion of Non-Interference, enriched with the idea of \emph{persistency}, meaning that the security property must hold for all reachable processes of the system.

In their proposal, the authors defined PSNI as follows. First, they partitioned the set $\cA \setminus \{\tau\}$ of visible action types into two disjoint sets, ${\cal H}$ and ${\cal L}$, representing high- and low-level action types, respectively. A high-level PEPA component $H$ is a PEPA term such that for all $H' \in ds(H)$, $\cA(H') \subseteq {\cal H}$, i.e., every derivative of $H$ may engage only in high-level actions. The set of all such high-level PEPA components is denoted by ${\cal C}_H$. Additionally, they extended the PEPA grammar by adding the {\boldmath$0$} symbol to denote a component that performs no activity.

With these notions in place, the authors introduced the following definition of Stochastic Non-Interference:

\begin{definition}\emph{(Stochastic Non-Interference, SNI)}
  Let $P$ be a PEPA component.
  \[
  P \in \mathit{SNI} \mbox{ iff }
  \forall H \in {\mathcal C}_H,\
  (P \sync{{\mathcal H}} {\bf 0})/{\mathcal H} \approx_l (P \sync{{\mathcal H}} H)/{\mathcal H}\,.\\[2mm]
  \]
\end{definition}

Here, $\approx_l$ denotes the equivalence relation known as \emph{lumpable bisimilarity}, which corresponds to the union of all possible lumpable bisimulations (see Definition~\ref{def:bisimulation}).

However, this definition did not enforce propagation of the security property across all process derivatives. To address this, the authors introduced the notion of \emph{reachable} processes and extended the property to account for persistency, yielding the definition of Persistent SNI:

\begin{definition}\emph{(Persistent Stochastic Non-Interference)}
Let $P$ be a PEPA component. 
\[
P \in \mathit{PSNI} \mbox{ iff }
\forall P' \in ds(P), \; P' \in \mathit{SNI}
\]
\end{definition}

In~\cite{PSNI_strong}, the following characterisation of PSNI was also provided:
\[
P \in \mathit{PSNI} \mbox{ iff }
\forall P' \in ds(P), \; \forall H \in {\mathcal C}_H,\
P' \setminus {\mathcal H} \approx_{\text{we}} (P' \sync{{\mathcal H}} H)/{\mathcal H}\,.\\[2mm]
\]

It is evident that all the above definitions fundamentally rely on the equivalence relation called \emph{lumpable bisimilarity}, which is built upon the notion of lumpable bisimulation.

However, as discussed in previous sections, alternative equivalence relations can also be considered. The aim of this paper is to characterise a new security property, which we call \emph{Exact PSNI}, abbreviated as $\mathit{EPSNI}$.

We begin by adapting the definition of SNI to obtain a new property, \emph{Exact SNI}:

\begin{definition}\emph{(Exact Stochastic Non-Interference, ESNI)}
  Let $P$ be a PEPA component.
  \[
  P \in \mathit{ESNI} \mbox{ iff }
  \forall H \in {\mathcal C}_H,\
  (P \sync{{\mathcal H}} {\bf 0})/{\mathcal H} \approx_{\text{we}} (P \sync{{\mathcal H}} H)/{\mathcal H}\,.\\[2mm]
  \]
\end{definition}

Following the same reasoning applied in the definition of $\mathit{PSNI}$, we introduce the persistency aspect, resulting in the definition of \emph{Exact Persistent Stochastic Non-Interference} ($\mathit{EPSNI}$):

\begin{definition}\emph{(Exact Persistent Stochastic Non-Interference)}
\label{def:psni}
Let $P$ be a PEPA component. 
\[
P \in \mathit{EPSNI} \mbox{ iff }
\forall P' \in ds(P), \; P' \in \mathit{ESNI}
\]
\end{definition}

Notably, weak-exact equivalence induces an exact equivalence on the underlying Markov chain of the system. This follows directly from the fact that all conditions defining weak-exact equivalence over PEPA components correspond to sufficient conditions for exact equivalence on Markov chains, as presented in~\cite{franceschinis:bounds}. Therefore, a weak-exact equivalence on PEPA components guarantees exact equivalence in the associated CTMC.

This property implies that all states of the Markov chain belonging to the same equivalence class under $\approx_{\text{we}}$ are equiprobable in the stationary distribution of the chain.

The use of the hiding operator on both sides of the equivalence ensures that indistinguishability is defined with respect to the low-level view of the system. As a result, and due to the properties of $\approx_{\text{we}}$, we conclude that for a low-level observer, the probability of being in a particular state does not depend on high-level interactions.

Indeed, high-level activities may be present in the process $(P \sync{{\mathcal H}} H)$, but are absent in $(P \sync{{\mathcal H}} {\bf 0})$. This does not mean the system lacks high-level actions altogether, but rather that such actions cannot be inferred by a low-level observer.

Observe that $(P' \sync{{\mathcal H}} {\bf 0})$ cannot perform any high-level activity. Therefore, the derivation graphs of $(P' \sync{{\mathcal H}} {\bf 0})$ and $(P' \sync{{\mathcal H}} {\bf 0})/{\mathcal H}$ are isomorphic as labelled transition systems. More precisely, $(P' \sync{{\mathcal H}} {\bf 0})$ behaves as the component $P'$ constrained from engaging in any high-level action.

Thus, we adopt the notation $P \setminus {\mathcal H}$ to denote the system $(P' \sync{{\mathcal H}} {\bf 0})/{\mathcal H}$, in analogy with the restriction operator in CCS.

From this, we can restate the property as:
\[
P \in \mathit{EPSNI} \mbox{ iff }
\forall P' \in ds(P), \; \forall H \in {\mathcal C}_H,\
P' \setminus {\mathcal H} \approx_{\text{we}} (P' \sync{{\mathcal H}} H)/{\mathcal H}\,.\\[2mm]
\]

Our first result is that $\mathit{EPSNI}$ is preserved by \riccardo{weak-exact equivalence}.

\riccardo{
\begin{lemma}\label{first}
If $P$ and $Q$ are two PEPA components such that $P\approx_{\text{we}} Q$ and $P\in \mathit{EPSNI}$ then also $Q\in \mathit{EPSNI}$.
\end{lemma}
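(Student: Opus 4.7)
The plan is to reduce the claim to the $\mathit{EPSNI}$ hypothesis on $P$ via transitivity of $\approx_{\text{we}}$ and compositionality. Unfolding the characterisation from Definition~\ref{def:psni}, the goal is to show that for every $Q' \in ds(Q)$ and every $H \in \mathcal{C}_H$,
\[
Q' \setminus \mathcal{H} \;\approx_{\text{we}}\; (Q' \sync{\mathcal{H}} H)/\mathcal{H}.
\]

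The first move is to establish a \emph{derivative-correspondence} property: whenever $P \approx_{\text{we}} Q$, for every $Q' \in ds(Q)$ there exists $P' \in ds(P)$ with $P' \approx_{\text{we}} Q'$. I would prove this by induction on the length of the computation from $Q$ to $Q'$. The inductive step uses the fact that the incoming and outgoing rate constraints in the definition of $\approx_{\text{we}}$ force every $\approx_{\text{we}}$-class reachable from one side to also be reachable from the other, so a matching $P'$ can always be selected in the same equivalence class as the current $Q'$.

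Given such a $P'$, three equivalences combine by transitivity. Since $P \in \mathit{EPSNI}$ and $P' \in ds(P)$, the hypothesis delivers $P' \setminus \mathcal{H} \approx_{\text{we}} (P' \sync{\mathcal{H}} H)/\mathcal{H}$. Proposition~\ref{prop::compositionality} applied to $P' \approx_{\text{we}} Q'$ gives $P' \setminus \mathcal{H} \approx_{\text{we}} Q' \setminus \mathcal{H}$, and a companion compositionality property for the cooperation operator with a fixed high-level component $H$ followed by hiding yields $(P' \sync{\mathcal{H}} H)/\mathcal{H} \approx_{\text{we}} (Q' \sync{\mathcal{H}} H)/\mathcal{H}$. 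Chaining these produces
\[
Q' \setminus \mathcal{H} \approx_{\text{we}} P' \setminus \mathcal{H} \approx_{\text{we}} (P' \sync{\mathcal{H}} H)/\mathcal{H} \approx_{\text{we}} (Q' \sync{\mathcal{H}} H)/\mathcal{H},
\]
so $Q' \in \mathit{ESNI}$ and, as $Q'$ is arbitrary, $Q \in \mathit{EPSNI}$.

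I expect the hardest step to be the derivative correspondence. Because $\approx_{\text{we}}$ is phrased via rates \emph{into} equivalence classes rather than as a forward simulation, it is not immediate that reachable states on one side are matched on the other. The natural tool is the up-to machinery: I would introduce a candidate relation pairing derivatives of $P$ with derivatives of $Q$ and show that it is a weak-exact equivalence up to $\approx_{\text{we}}$, concluding via Lemma~\ref{lemma::up-to}. A secondary hurdle is that Proposition~\ref{prop::compositionality} only covers the restriction operator, so compositionality of $\approx_{\text{we}}$ with respect to $\sync{\mathcal{H}} H$ and with respect to hiding must either be stated alongside or proven separately, again via the up-to technique.
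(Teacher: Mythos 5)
Your overall skeleton---find a derivative $P'\in ds(P)$ matching $Q'$, then chain $Q'\setminus\mathcal H \approx_{\text{we}} P'\setminus\mathcal H \approx_{\text{we}} (P'\sync{\mathcal H}H)/\mathcal H \approx_{\text{we}} (Q'\sync{\mathcal H}H)/\mathcal H$---is the natural reduction, but both load-bearing steps are unsupported here, and the first is justified by an argument that does not apply to this equivalence. You claim the inductive step of the derivative correspondence works because the rate constraints ``force every $\approx_{\text{we}}$-class reachable from one side to also be reachable from the other.'' They do not: weak-exact equivalence constrains the rates \emph{entering} $P$ and $Q$ from each class ($q[S,P,\alpha]=q[S,Q,\alpha]$) and only the \emph{total} outgoing rate $q[P,\alpha]=q[Q,\alpha]$, with no control over which classes the outgoing transitions target. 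The transfer property that powers a forward simulation by induction on computation length is exactly the feature of the outgoing-rate (lumpable) world that the paper abandons when it switches to the incoming-rate view. Your fallback via Lemma~\ref{lemma::up-to} does not repair this, since showing that a candidate relation is a weak-exact equivalence up to $\approx_{\text{we}}$ says nothing about which derivatives are actually reachable on each side.

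The second gap is the ``companion compositionality property'' giving $(P'\sync{\mathcal H}H)/\mathcal H \approx_{\text{we}} (Q'\sync{\mathcal H}H)/\mathcal H$ from $P'\approx_{\text{we}} Q'$. The paper establishes compositionality only for restriction and hiding (Proposition~\ref{prop::compositionality}, Lemma~\ref{new1}), explicitly notes that $\approx_{\text{we}}$ is not compositional even for prefix, and lists the cooperation combinator as future work. Congruence of exact-style equivalences with respect to cooperation is precisely the kind of claim that tends to fail, so this cannot be treated as a routine lemma ``to be stated alongside''; as written, your proof rests on an unproven and doubtful assumption. A workable proof would have to bypass the simulation-plus-congruence pattern altogether, e.g.\ by directly exhibiting a weak-exact equivalence witnessing $Q\setminus\mathcal H \approx_{\text{we}} (Q\sync{\mathcal H}H)/\mathcal H$ obtained by gluing, through the classes of $\approx_{\text{we}}$, the relation witnessing the corresponding property for $P$, or by routing the argument through the local characterization $\approx_{\text{we}}^{\mathcal H}$ of Corollary~\ref{cor:psni}.
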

}

Subsequently, we introduce a novel bisimulation-based equivalence relation over PEPA components, named \riccardo{$\approx_{\text{we}}^{hc}$}, that allows us to give a first characterization of $\mathit{EPSNI}$ in which the quantification over all possible high level components $H$ is hidden inside the 
equivalence. In particular, we show that $P\in \mathit{EPSNI}$ if and only if $P\setminus {\cal H}$ and $P$ are not distinguishable with respect to $\approx_{\text{we}}^{hc}$. Intuitively, two processes are 
$\approx_{\text{we}}^{hc}$-equivalent if they can simulate each other in any possible high context, i.e., in every context $C[\_]$ of the form $(\_ \ \sync{{\cal H}}H)/{\cal H}$ where $H\in {\cal C}_H$.
Observe that for any high context $C[\_]$ and PEPA model $P$, all the states reachable from $C[P]$ have the form $C'[P']$ with $C'[\_]$ being a high context too and $P'\in ds(P)$.

We now introduce the concept of \emph{\cancella{lumpable bisimulation on high contexts}\riccardo{weak-exact equivalence on high contexts}}: the idea is that, given two PEPA models $P$ and $Q$, when a high level context $C[\_]$ filled with $P$ executes a certain activity moving $P$ to $P'$ then the same context filled with $Q$ is able to simulate this step moving $Q$ to $Q'$ so that $P'$ and $Q'$ are again \cancella{lumpable bisimilar} \riccardo{weak-exact equivalent} on high contexts, and vice-versa. This must be true for every possible high context $C[\_]$. It is important to note that the quantification over all possible high contexts is re-iterated for $P'$ and $Q'$.
%
For a PEPA model $P$,  $\alpha\in \cA$,  $S\subseteq \mathcal C$ and a high context $C[\_]$
we define
%
\(
 q_C(P,P',\alpha)=\sum_{C[P] \xrightarrow{(\alpha,r_{\alpha})} C'[P']} r_{\alpha}
\)
and
\cancella{\(
q_C[P,S,\alpha]=\sum_{P'\in S} q_C(P,P',\alpha)\,.
\)}
\riccardo{\(
q_C[S,P,\alpha]=\sum_{P'\in S} q_C(P',P,\alpha)\,.
\)}

The notion of \emph{\cancella{lumpable bisimilarity}\riccardo{weak-exact equivalence} on high contexts} is defined as follows.
\riccardo{
\begin{definition}\emph{(weak-exact equivalence on high contexts)}
\label{def:hcontexts}
An equivalence relation over PEPA components, $\cR\subseteq \cC\times \cC$, is a \emph{weak-exact equivalence on high contexts} if whenever $(P,Q)\in \cR$ then for all high context $C[\_]$,
for  all $\alpha\in \cA$
\begin{itemize}
  \item either $\alpha \neq \tau$ and: 
  \begin{itemize}
    \item $q_C[P, \alpha] = q_C[Q, \alpha]$
    \item $\forall S \in \mathcal{C}/\mathcal{R} \,\,\,\,\,\, q_C[S, P, \alpha] = q_C[S, Q, \alpha]$
  \end{itemize}
  \item or $\alpha = \tau$ and:
  \begin{itemize}
    \item $\forall S \in \mathcal{C} / \mathcal{R}, P, Q \notin S \,\,\,\,\,\, q_C[S, P, \tau] = q_C[S, Q, \tau]$
    \item $P, Q \in S \,\,\,\,\,\, q_C[S, P, \tau] - q_C[P, \tau] = q_C[S, Q, \tau] - q_C[Q, \tau]$
  \end{itemize}
\end{itemize}

Two PEPA components $P$ and $Q$ are \emph{weak-exact equivalent on high contexts}, 
written 
$P\approx^{hc}_{\text{we}}Q$, if $(P,Q)\in\cR$ for some weak-exact equivalence on high contexts~$\cR$,~i.e.,
$$\approx^{hc}_{\text{we}} \ =\bigcup \ \{\cR\ |\ \cR \mbox{ is a weak-exact equivalence on high contexts}\}.$$
\noindent
$\approx^{hc}_{\text{we}}$ is called \emph{weak-exact equivalence on high contexts} and it is the largest weak-exact equivalence on high contexts over PEPA components. 
\end{definition}
}

Theorem~\ref{theo:hc} gives a characterization of $\mathit{EPSNI}$ in terms of  \cancella{$\approx^{hc}_l$}\riccardo{$\approx^{hc}_{\text{we}}$}. Notice that
\cancella{$\approx^{hc}_l$}\riccardo{$\approx^{hc}_{\text{we}}$} still contains a universal quantification over all high level contexts. However, the characterization of $\mathit{EPSNI}$ given in Theorem \ref{theo:hc} gives insights into the meaning and the properties of $\mathit{EPSNI}$. As a matter of fact, the persistency requirement is now absorbed in the notion of \cancella{lumpable bisimulation} \riccardo{weak-exact equivalence} on high contexts which is in a sense persistent by definition, as usual for lumpability notions.

\riccardo{
\begin{theorem}\label{theo:hc}
Let $P$ be a PEPA component. Then \[P\in \mathit{EPSNI}\mbox{ iff } P\setminus{\cal H}\approx^{hc}_{\text{we}} P\,.\]
\end{theorem}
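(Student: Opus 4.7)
The plan is to prove both directions by building on the definition of $\mathit{EPSNI}$ and a suitable candidate for $\approx^{hc}_{\text{we}}$. The key unifying observation is that for any high context $C[\_] = (\_ \sync{\mathcal{H}} H)/\mathcal{H}$, inserting $P' \setminus \mathcal{H}$ yields a process whose derivation graph is isomorphic (as a multi-transition system) to that of $P' \setminus \mathcal{H}$ itself: the inner hiding and the synchronisation with $\mathbf{0}$ prevent every $\mathcal{H}$-labelled activity on the left, so $H$ can never participate in any shared activity and is effectively frozen; whereas $C[P']$ coincides with $(P' \sync{\mathcal{H}} H)/\mathcal{H}$, which is exactly the right-hand side of the $\mathit{EPSNI}$ equivalence. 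Hence the context-indexed rate expressions $q_C$ appearing in the definition of $\approx^{hc}_{\text{we}}$ can be translated into the ordinary rate expressions $q$ appearing in $\approx_{\text{we}}$, and vice versa.

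For the $(\Rightarrow)$ direction, assuming $P \in \mathit{EPSNI}$, I would take as candidate
\[
\mathcal{R} \ = \ \{(P',\, P'\setminus \mathcal{H}),\; (P'\setminus \mathcal{H},\, P') : P' \in ds(P)\} \ \cup\ \{(R,R) : R \in \mathcal{C}\},
\]
closed under transitivity, and verify that $\mathcal{R}$ is a weak-exact equivalence on high contexts. For each nontrivial pair $(P', P'\setminus\mathcal{H})$ and each high context $C$, the identities $q_C[P',\alpha]=q_C[P'\setminus\mathcal{H},\alpha]$ and the corresponding incoming-rate conditions on every class $S$ follow from applying $\mathit{EPSNI}$ at the derivative $P'$: the equivalence $P'\setminus\mathcal{H} \approx_{\text{we}} (P'\sync{\mathcal{H}} H)/\mathcal{H}$ provides, via Definition~\ref{def:bisimilarity}, precisely the $\alpha \neq \tau$ and $\alpha = \tau$ clauses required by $\approx^{hc}_{\text{we}}$, once the graph isomorphism above is used on the left-hand side. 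Closure under $ds(P)$ guarantees that target states of the matched transitions are again in $\mathcal{R}$, which is exactly what the persistency requirement of $\mathit{EPSNI}$ buys us.

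For the $(\Leftarrow)$ direction, assuming $P \setminus \mathcal{H} \approx^{hc}_{\text{we}} P$, I would fix an arbitrary $P' \in ds(P)$ and an arbitrary $H \in \mathcal{C}_H$, and set $C[\_] = (\_ \sync{\mathcal{H}} H)/\mathcal{H}$. First I would show that $(P'\setminus\mathcal{H}, P') \in \mathrel{\approx^{hc}_{\text{we}}}$ for every $P' \in ds(P)$: since any transition from $C[P]$ leads to a state of the form $C'[P'']$ with $P'' \in ds(P)$ and $C'$ again a high context, the bisimulation-style clauses transfer to every reachable pair, so closing a witness $\mathcal{R}_0$ along reachability yields a weak-exact equivalence on high contexts that relates each $P'\setminus\mathcal{H}$ with the corresponding $P'$. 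Then, starting from such an $\mathcal{R}_0$, I define
\[
\mathcal{R}' \ = \ \{(R\setminus\mathcal{H},\, C[R]) : (R\setminus\mathcal{H},\, R) \in \mathcal{R}_0\} \ \cup\ \{(T,T) : T \in \mathcal{C}\},
\]
closed under equivalence, and verify (again reading the $q_C$-rates as ordinary $q$-rates via the isomorphism) that $\mathcal{R}'$ is a weak-exact equivalence in the sense of Definition~\ref{def:bisimilarity}, which establishes $P'\setminus\mathcal{H} \approx_{\text{we}} C[P']$ and hence $P \in \mathit{EPSNI}$.

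The main obstacle I anticipate is the careful handling of the $\tau$-clause of the weak-exact definition: the self-loop correction term $q_C[S,P,\tau] - q_C[P,\tau]$, which only appears when $P$ and $Q$ lie in the same class $S$, must be tracked consistently across the translation between $q_C$ (where high synchronisations inside $C[P']$ are relabelled to $\tau$ by the outer hiding and may stay within the same equivalence class) and $q$ (where the synchronisation has already been performed and hidden). A secondary technical concern is to ensure that the proposed relations are in fact equivalences (transitivity is the non-trivial part) and that their equivalence classes partition the relevant components so that the incoming-rate sums on both sides range over matching sets; this will rely on the observation that every derivative of the contextualised processes has the form $C'[P'']$ with $P'' \in ds(P)$, and the companion form $P''\setminus \mathcal{H}$ is always available in $\mathcal{R}$.
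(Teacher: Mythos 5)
Your overall architecture is the right one and matches what the proof of this theorem has to do: observe that $H$ is frozen in $C[P'\setminus\mathcal{H}]$ so that $C[P'\setminus\mathcal{H}]$ is isomorphic to $P'\setminus\mathcal{H}$ while $C[P']=(P'\sync{\mathcal{H}}H)/\mathcal{H}$ is exactly the right-hand side of the $\mathit{EPSNI}$ condition, translate $q_C$-rates into ordinary $q$-rates through that isomorphism, and use the fact that every derivative of $C[P]$ has the form $C'[P'']$ with $C'$ a high context and $P''\in ds(P)$ to absorb persistency into the relation. However, there is a genuine gap in both directions, and it sits precisely at the point you dismiss as a ``secondary technical concern'': the choice of the candidate relations. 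Weak-exact equivalence is an \emph{incoming}-rate notion, so the conditions to be verified are indexed by the equivalence classes of the candidate relation on the \emph{source} side; refining a relation strictly strengthens these conditions. Your candidates (the pairs $(P',P'\setminus\mathcal{H})$, respectively $(R\setminus\mathcal{H},C[R])$, padded with the identity on all of $\mathcal{C}$) have classes that are two-element sets plus singletons, whereas the hypotheses only give you rate equalities aggregated over the classes of the witnessing equivalences --- the classes of $\approx_{\text{we}}$ in the $(\Rightarrow)$ direction, and the classes of the given $\approx^{hc}_{\text{we}}$-witness $\mathcal{R}_0$ in the $(\Leftarrow)$ direction --- which may be strictly coarser. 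Concretely, for the pair $(P',P'\setminus\mathcal{H})$ and the class $S=\{P'',P''\setminus\mathcal{H}\}$ your relation demands $q_C(P'',P',\alpha)=q(P''\setminus\mathcal{H},P'\setminus\mathcal{H},\alpha)$ state by state, while $\mathit{EPSNI}$ only guarantees this after summing over the whole $\approx_{\text{we}}$-class of $C[P'']$, which may contain several such pairs whose individual contributions need not match.

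The identity padding makes this worse in a way that is specific to exact-style equivalences: for a singleton class $\{R\}$ with, say, $R=(\alpha,r).P'$, your relation requires $q_C(R,P',\alpha)=q_C(R,P'\setminus\mathcal{H},\alpha)$, i.e.\ $r=0$, which is false. Padding with the identity is harmless for outgoing-rate (bisimulation-style) definitions but not for incoming-rate ones, because an unrelated component feeding into only one of the two related states breaks the condition unless its class contains a compensating source. The repair is to build the candidate relations so that their classes are induced by the witnessing equivalences rather than by the bare pairs: in the $(\Leftarrow)$ direction take $\mathcal{R}'=\{(C'[R_1],C'[R_2]): (R_1,R_2)\in\mathcal{R}_0,\ C'$ a high context reachable from $C\}$ so that the $q_C$-conditions of Definition~\ref{def:hcontexts} transfer verbatim to the $q$-conditions of Definition~\ref{def:bisimilarity}; in the $(\Rightarrow)$ direction pull the classes of $\approx_{\text{we}}$ on the contextualised processes back to the inner components. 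Until the candidates are chosen this way, the central verification step of your proof does not go through, and the $\tau$ self-loop correction term (which you correctly flag) inherits the same class-granularity problem.
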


}

We show how it is possible to give a characterization of $\mathit{EPSNI}$ avoiding both the universal quantification over all the possible high level components and the universal quantification over all the possible reachable states.

Previously, we have shown how the idea of ``being secure in every state'' can be directly moved  inside the \cancella{lumpable bisimulation}\riccardo{weak-exact equivalence} on high contexts notion ($\approx^{hc}_{\text{we}}$). However, this bisimulation notion implicitly contains a quantification over all possible high contexts. We now prove that $\approx^{hc}_{\text{we}}$ can be expressed in a rather simpler way by exploiting local information only. This can be done by defining a novel equivalence relation which focuses only on observable actions that do not belong to ${\cal H}$. More in detail, we define an observation equivalence where actions from ${\cal H}$ \emph{may} be ignored.

We first introduce the notion of \emph{\cancella{lumpable bisimilarity} \riccardo{weak-exact equivalence} up to ${\cal H}$}.
\riccardo{

\begin{definition}\emph{(weak-exact equivalence up to ${\cal H}$)}
\label{def:upto}
An equivalence relation over PEPA components, $\cR\subseteq \cC\times \cC$, is a \emph{weak-exact equivalence up to ${\cal H}$} if whenever $(P,Q)\in \cR$ then for all $\alpha\in \cA$:
\begin{itemize}
  \item either $\alpha \notin \mathcal{H} \cup \{\tau\}$ and: 
  \begin{itemize}
      \item $q[P, \alpha] = q[Q, \alpha]$
      \item $\forall S \in \mathcal{C}/\mathcal{R} \,\,\,\,\,\, q[S, P, \alpha] = q[S, Q, \alpha]$
  \end{itemize}
  \item or $\alpha \in \mathcal{H} \cup \{\tau\}$ and:
  \begin{itemize}
      \item $\forall S \in \mathcal{C} / \mathcal{R}, P, Q \notin S \,\,\,\,\,\, q[S, P, \alpha] = q[S, Q, \alpha]$
      \item $P, Q \in S \,\,\,\,\,\, q[S, P, \alpha] - q[P, \alpha] = q[S, Q, \alpha] - q[Q, \alpha]$
  \end{itemize}
\end{itemize}

Two PEPA components $P$ and $Q$ are \emph{weak-exact equivalent up to ${\cal H}$}, 
written 
$P\approx^{\cal H}_{\text{we}}Q$, if $(P,Q)\in\cR$ for some exact equivalence up to ${\cal H}$,~i.e.,
$$\approx^{\cal H}_{\text{we}} \ =\bigcup \ \{\cR\ |\ \cR \mbox{ is a weak-exact equivalence up to \mbox{${\cal H}$}}\}.$$
\noindent
$\approx^{\cal H}_{\text{we}}$ is called \emph{weak-exact equivalence  up to ${\cal H}$} and it is the largest symmetric weak exact equivalence up to ${\cal H}$ over PEPA components.
\end{definition}}

Theorem~\ref{teo:H-hc} shows that binary relations $\approx^{hc}_{\text{we}}$ and 
$\approx^{{\cal H}}_{\text{we}}$ are equivalent.
\riccardo
{
\begin{theorem}\label{teo:H-hc}
Let $P$ and $Q$ be two PEPA components. Then \(P \approx^{hc}_{\text{we}}  Q\) if and only if \(
P \approx^{\cal H}_{\text{we}} Q.\)
\end{theorem}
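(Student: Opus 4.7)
The plan is to prove the two inclusions separately, exhibiting in each case an explicit witness relation. For $(\Leftarrow)$ I would show that $\approx^{\cal H}_{\text{we}}$ is itself a weak-exact equivalence on high contexts, so every $\approx^{\cal H}_{\text{we}}$-pair lies in the largest such relation $\approx^{hc}_{\text{we}}$. For $(\Rightarrow)$ I would show that $\approx^{hc}_{\text{we}}$ is itself a weak-exact equivalence up to ${\cal H}$, by instantiating the universal quantification over high contexts at cleverly chosen specific contexts and taking differences of the resulting identities.

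For $(\Leftarrow)$, fix $(P',Q')\in{\approx^{\cal H}_{\text{we}}}$ and a high context $C[\_]=(\_\sync{{\cal H}}H)/{\cal H}$. Since hiding converts every $h\in{\cal H}$ action of $C[P']$ into $\tau$, the clauses of Definition~\ref{def:hcontexts} for $\alpha\in{\cal H}$ are vacuous. For $\alpha\notin{\cal H}\cup\{\tau\}$ the context neither enables nor synchronizes $\alpha$, so $q_C[P',\alpha]=q[P',\alpha]$ and $q_C[S,P',\alpha]=q[S,P',\alpha]$, and the required equalities follow from the non-${\cal H}$ clause of Definition~\ref{def:upto}. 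The core case is $\alpha=\tau$: I would decompose every $\tau$-move of $C[P']$ into a direct $\tau$-transition of $P'$ or a hidden synchronization on some $h\in{\cal H}$ between $P'$ and $H$, then sum the contributions class by class. The $\tau$ balance clause of Definition~\ref{def:upto} handles the direct part, while the $h$ balance clause (for every $h\in{\cal H}$) must be combined with the cooperation rate formula to handle the synchronized part.

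For $(\Rightarrow)$, fix $(P',Q')\in{\approx^{hc}_{\text{we}}}$. First take $H={\bf 0}$: the context $C_0[\_]=(\_\sync{{\cal H}}{\bf 0})/{\cal H}$ blocks every ${\cal H}$-synchronization, so the conditions of Definition~\ref{def:hcontexts} specialize to plain PEPA identities yielding directly the non-${\cal H}$ clause of Definition~\ref{def:upto} and the $\tau$-clause restricted to genuine $\tau$-moves. To recover the ${\cal H}$-clause for a fixed $h\in{\cal H}$, I would take $H_h\rmdef(h,\top).H_h$, so that $r_h(H_h)=\top$ and the cooperation rule gives $\frac{r_1\cdot\top}{r_h(P')\cdot\top}\min(r_h(P'),\top)=r_1$: every $h$-transition of $P'$ appears in $C_h[P']=(\_\sync{{\cal H}}H_h)/{\cal H}$ as a hidden $\tau$ of the original rate, returning to the same context $C_h$ (since $H_h$ loops to itself). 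Writing the $\tau$-conditions of Definition~\ref{def:hcontexts} for $C_h$ and subtracting the $\tau$-conditions for $C_0$ isolates the $h$-rates, yielding the required $q[S,P',h]=q[S,Q',h]$ (when $P',Q'\notin S$) and balance $q[S,P',h]-q[P',h]=q[S,Q',h]-q[Q',h]$ (when $P',Q'\in S$). Iterating over $h\in{\cal H}$ completes the argument.

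The main obstacle is the PEPA cooperation rate $R=\frac{r_1 r_2}{r_h(P)r_h(H)}\min(r_h(P),r_h(H))$, which couples individual rates to the apparent rates $r_h(\cdot)$: when those apparent rates vary inside an equivalence class, the synchronized contributions do not factor across the class. In $(\Leftarrow)$ this requires refining each class $S$ by the common value of $r_h(P'')$ and invoking the balance clause on these refined sub-classes; I expect this bookkeeping to be the bulk of the technical content. In $(\Rightarrow)$ the passive choice $r_h(H_h)=\top$ collapses $\min(r,\top)$ to $r$ and avoids the coupling entirely, making that direction substantially more direct.
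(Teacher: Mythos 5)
Your overall architecture---each inclusion proved by exhibiting a witness relation, and, for the forward direction, instantiating the high context at $H=\mathbf{0}$ and at the passive loop $H_h\rmdef(h,\top).H_h$ and subtracting the resulting $\tau$-identities---is the natural one, and the $(\Rightarrow)$ direction is essentially complete: with $H_h$ passive and self-looping the cooperation rate collapses to $r_1$, the context never changes, $q_{C_h}[S,P',\tau]=q[S,P',\tau]+q[S,P',h]$ and $q_{C_h}[P',\tau]=q[P',\tau]+q[P',h]$, so subtracting the $C_0$ identities does isolate both ${\cal H}$-clauses of Definition~\ref{def:upto}.

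The gap is in $(\Leftarrow)$, exactly where you flag it, and your proposed repair does not go through. For an active $H$, the hidden synchronisations contribute $\sum_{P''\in S}\frac{q(P'',P',h)}{r_h(P'')}\min\bigl(r_h(P''),r_h(H)\bigr)$ to $q_C[S,P',\tau]$. Since Definition~\ref{def:upto} exempts $\alpha\in{\cal H}\cup\{\tau\}$ from the outgoing-rate condition $q[P,\alpha]=q[Q,\alpha]$, the apparent rate $r_h(\cdot)$ need not be constant on $\approx^{\cal H}_{\text{we}}$-classes, so the weight $\min(r_h(P''),r_h(H))/r_h(P'')$ genuinely varies over $P''\in S$ and the sum is not determined by $q[S,P',h]$ alone. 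Your fix---refining $S$ by the value of $r_h$ and ``invoking the balance clause on these refined sub-classes''---is not licensed by the hypothesis: the incoming-rate and balance identities of Definition~\ref{def:upto} are available only for full classes $S\in{\cal C}/{\approx^{\cal H}_{\text{we}}}$, and nothing controls sums over proper subsets of a class. Note that this obstacle is specific to the incoming-rate (exact) setting: in the outgoing (lumpable) version the denominator is $r_h(P)$ of the single fixed source, not a quantity varying over the class being summed. As written, this direction needs either a separate argument that $r_h$ is constant on the relevant classes or a different decomposition of the synchronised $\tau$-flow; moreover, in the balance case $P',Q'\in S$ you must also relate $q_C[P',\tau]=q[P',\tau]+\sum_{h\in{\cal H}}\min(r_h(P'),r_h(H))$ to the corresponding quantity for $Q'$, which your sketch does not address.
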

}

Theorem \ref{teo:H-hc} allows us to identify a local property of processes (with no quantification on the states and on the high contexts) which is a necessary and sufficient condition for  $\mathit{EPSNI}$. This is stated by the following corollary:

\riccardo{
\begin{corollary}
  \label{cor:psni}
Let $P$ be a PEPA component. Then \(P\in \mathit{EPSNI} \mbox{ iff } P\setminus {\cal H}\approx^{\cal H}_{\text{we}} P\,.\)
\end{corollary}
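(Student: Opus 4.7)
The plan is to derive this corollary as an immediate consequence of the two theorems that precede it, namely Theorem~\ref{theo:hc} and Theorem~\ref{teo:H-hc}. No further structural argument is needed, because both of these theorems have already done the substantive work: the first absorbs the quantification over reachable derivatives into the definition of $\approx^{hc}_{\text{we}}$, and the second replaces the hidden quantification over high-level contexts with the local, context-free equivalence $\approx^{\cal H}_{\text{we}}$.

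Concretely, I would proceed in two short steps. First, by Theorem~\ref{theo:hc} applied to $P$, the property $P\in \mathit{EPSNI}$ is equivalent to $P\setminus{\cal H}\approx^{hc}_{\text{we}} P$. Second, by Theorem~\ref{teo:H-hc} instantiated with the pair $(P\setminus{\cal H},\,P)$, the relation $P\setminus{\cal H}\approx^{hc}_{\text{we}} P$ holds if and only if $P\setminus{\cal H}\approx^{\cal H}_{\text{we}} P$. Chaining these two biconditionals yields exactly the statement of the corollary. Since both results are stated as ``iff'' on arbitrary PEPA components, the composition is sound and no side conditions must be checked.

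There is therefore no real obstacle in the proof itself; the proof amounts to writing the two equivalences on consecutive lines and concluding by transitivity of ``iff''. The interest of the corollary lies not in its derivation but in what it delivers: a characterisation of $\mathit{EPSNI}$ that is free of any universal quantification, neither over the derivative set $ds(P)$ (as in Definition~\ref{def:psni}) nor over the class of high-level contexts $C[\_]$ hidden inside $\approx^{hc}_{\text{we}}$. This turns EPSNI into a locally checkable condition, analogous to the unwinding-style characterisations available for PSNI, and it is the form most amenable to algorithmic verification and to the compositionality arguments developed in Section~\ref{sec:compositionality}.
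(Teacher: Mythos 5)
Your proposal is correct and matches the paper's intended derivation exactly: the corollary follows immediately by chaining Theorem~\ref{theo:hc} (which gives $P\in \mathit{EPSNI}$ iff $P\setminus{\cal H}\approx^{hc}_{\text{we}} P$) with Theorem~\ref{teo:H-hc} (which identifies $\approx^{hc}_{\text{we}}$ with $\approx^{\cal H}_{\text{we}}$). Nothing further is needed.
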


We now study some compositionality properties of weak-exact equivalence up to $\mathcal H$. These will allow us both to provide characterizations of $\mathit{EPSNI}$ in terms of unwinding conditions and to prove compositionality results for $\mathit{EPSNI}$.

\begin{lemma}\label{lemma:alto}
Let $\mathcal T$ be a symmetric and transitive binary relation over $\mathcal C$ and let $\mathcal T^*_{\mathcal H}$ be the symmetric and transitive closure
of $\mathcal T\cup \approx_{\text{we}}^{\mathcal H}$. $\mathcal T^*_{\mathcal H}$ is a weak-exact equivalence up to $\mathcal H$ if and only if for all 
$(P,Q)\in \mathcal T$ it holds that:
\begin{itemize}
  \item either $\alpha \notin \mathcal{H} \cup \{\tau\}$ and: 
  \begin{itemize}
      \item $q[P, \alpha] = q[Q, \alpha]$
      \item $\forall S \in \mathcal{C}/\mathcal T^*_{\mathcal H} \,\,\,\,\,\, q[S, P, \alpha] = q[S, Q, \alpha]$
  \end{itemize}
  \item or $\alpha \in \mathcal{H} \cup \{\tau\}$ and:
  \begin{itemize}
      \item $\forall S \in \mathcal{C} / \mathcal T^*_{\mathcal H}, P, Q \notin S \,\,\,\,\,\, q[S, P, \alpha] = q[S, Q, \alpha]$
      \item $P, Q \in S \,\,\,\,\,\, q[S, P, \alpha] - q[P, \alpha] = q[S, Q, \alpha] - q[Q, \alpha]$
  \end{itemize}
\end{itemize}
\end{lemma}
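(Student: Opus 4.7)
The forward implication is immediate: if $\mathcal{T}^*_{\mathcal{H}}$ is a weak-exact equivalence up to $\mathcal{H}$, then since $\mathcal{T} \subseteq \mathcal{T}^*_{\mathcal{H}}$, every pair $(P,Q) \in \mathcal{T}$ satisfies the listed conditions by unfolding Definition~\ref{def:upto} against the partition $\mathcal{C}/\mathcal{T}^*_{\mathcal{H}}$ itself.

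For the converse, suppose the conditions hold for every $(P,Q) \in \mathcal{T}$. Since $\approx_{\text{we}}^{\mathcal{H}}$ is reflexive, $\mathcal{T}^*_{\mathcal{H}}$ is an equivalence relation, and each of its classes is a disjoint union of $\approx_{\text{we}}^{\mathcal{H}}$-classes. Any pair $(P,Q) \in \mathcal{T}^*_{\mathcal{H}}$ is witnessed by a finite chain $P = R_0, R_1, \ldots, R_n = Q$ in which each consecutive link $(R_i, R_{i+1})$ belongs to $\mathcal{T}$ or to $\approx_{\text{we}}^{\mathcal{H}}$, and in which all $R_i$ share the same $\mathcal{T}^*_{\mathcal{H}}$-class. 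My plan is to proceed by induction on $n$: verify the weak-exact-up-to-$\mathcal{H}$ conditions, with respect to $\mathcal{C}/\mathcal{T}^*_{\mathcal{H}}$, for a single link, and then chain the equalities along the witness by transitivity.

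A $\mathcal{T}$-link satisfies the conditions by hypothesis. For a $\approx_{\text{we}}^{\mathcal{H}}$-link $(R_i, R_{i+1})$ the conditions are available a priori only with respect to the finer partition $\mathcal{C}/\approx_{\text{we}}^{\mathcal{H}}$ and must be lifted to $\mathcal{C}/\mathcal{T}^*_{\mathcal{H}}$. Writing a $\mathcal{T}^*_{\mathcal{H}}$-class as $S = \bigsqcup_j S_j$ with each $S_j$ an $\approx_{\text{we}}^{\mathcal{H}}$-class, I exploit the additivity $q[S, R, \alpha] = \sum_j q[S_j, R, \alpha]$. If $R_i, R_{i+1} \notin S$, then $R_i, R_{i+1} \notin S_j$ for every $j$, so summing the per-$S_j$ equalities yields $q[S, R_i, \alpha] = q[S, R_{i+1}, \alpha]$ for every $\alpha$. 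If $R_i, R_{i+1} \in S$, then exactly one subclass $S_{j_0}$ contains both of them; summing the off-diagonal equalities for $j \neq j_0$ with the diagonal equality at $j = j_0$ yields $q[S, R_i, \alpha] - q[R_i, \alpha] = q[S, R_{i+1}, \alpha] - q[R_{i+1}, \alpha]$, since only $S_{j_0}$ contributes the $-q[R_i, \alpha]$ and $-q[R_{i+1}, \alpha]$ corrections. The apparent-rate equality $q[R_i, \alpha] = q[R_{i+1}, \alpha]$ for $\alpha \notin \mathcal{H} \cup \{\tau\}$ is inherited directly from $\approx_{\text{we}}^{\mathcal{H}}$.

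The main obstacle I anticipate is the self-loop case $P, Q \in S$ for $\alpha \in \mathcal{H} \cup \{\tau\}$: along a chain where all $R_i$ lie in the same class $S$, each step produces an equality of the form $q[S, R_i, \alpha] - q[R_i, \alpha] = q[S, R_{i+1}, \alpha] - q[R_{i+1}, \alpha]$, and these must compose by transitivity of equality to give $q[S, P, \alpha] - q[P, \alpha] = q[S, Q, \alpha] - q[Q, \alpha]$. Keeping the bookkeeping of the self-loop corrections $-q[R_i, \alpha]$ straight at intermediate indices, and uniformly handling chains that switch freely between $\mathcal{T}$-links and $\approx_{\text{we}}^{\mathcal{H}}$-links while remaining inside $S$, is the technical crux of the argument.
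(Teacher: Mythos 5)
Your proof is correct and follows essentially the standard route for this kind of ``up to'' lemma: the forward direction by inclusion $\mathcal T\subseteq\mathcal T^*_{\mathcal H}$, and the converse by decomposing each $\mathcal T^*_{\mathcal H}$-class into $\approx_{\text{we}}^{\mathcal H}$-classes, using additivity of $q[\cdot,P,\alpha]$ to lift the conditions for $\approx_{\text{we}}^{\mathcal H}$-links to the coarser partition, and chaining equalities along a witnessing sequence. The self-loop case you flag as the crux is in fact unproblematic for exactly the reason you note: all intermediate elements of the chain lie in the same $\mathcal T^*_{\mathcal H}$-class, so every link contributes an equality of the same ``corrected'' form and transitivity of equality closes the argument.
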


}

\riccardo{

We recall that $\mathcal L=\mathcal A\setminus \{\tau\}\cup \mathcal H$ is the set of low level action types.
\begin{lemma}\label{new1}
Let $P_1,P_2,Q,$ be PEPA components. Let $\alpha\in \mathcal A$, $A\subseteq \mathcal A$, and $L\subseteq \mathcal L$.
If $P_1\approx_{\text{we}}^{\mathcal H}P_2$, then: 
  \begin{itemize}
  \item[1.] $P_1\setminus A\approx_{\text{we}}^{\mathcal H}P_2\setminus A$;
  \item[2.] $P_1/ A \approx_{\text{we}}^{\mathcal H} P_2 / A$;
  \end{itemize}
\end{lemma}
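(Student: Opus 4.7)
The plan is to establish both parts by invoking Lemma~\ref{lemma:alto}: in each case I exhibit a symmetric, transitive candidate relation $\mathcal T$ and verify the four rate conditions on pairs of $\mathcal T$, with equivalence classes computed in $\mathcal T^*_{\mathcal H}$. This is cheaper than directly checking the weak-exact-up-to-$\mathcal H$ conditions on all pairs of $\approx_{\text{we}}^{\mathcal H}\!\mathcal T\!\approx_{\text{we}}^{\mathcal H}$, since I only need to reason about the pairs actually introduced by applying the operator.

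For part~1, I set $\mathcal T_1=\{(P_1\setminus A,P_2\setminus A)\mid P_1\approx_{\text{we}}^{\mathcal H}P_2\}$. Recalling the convention $P\setminus A=(P\sync{A}\mathbf 0)/A$ and that $\mathbf 0$ enables nothing, every $A$-transition of $P$ is blocked in $P\sync{A}\mathbf 0$, and no $\tau$-transitions are created by the subsequent hiding (there are no $A$-arcs left to hide). Hence the derivation graph of $P\setminus A$ is obtained from that of $P$ by deleting all $A$-labelled arcs, so for every $\alpha\in\mathcal A$ and every relevant class $S$ the outgoing rate $q[P_i\setminus A,\alpha]$ and each incoming rate $q[S,P_i\setminus A,\alpha]$ either vanish (if $\alpha\in A$) or coincide with the corresponding quantity on $P_i$. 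The four conditions required by Lemma~\ref{lemma:alto} then reduce, case by case on whether $\alpha\in\mathcal H\cup\{\tau\}$, to the defining conditions of $P_1\approx_{\text{we}}^{\mathcal H}P_2$.

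For part~2, I set $\mathcal T_2=\{(P_1/A,P_2/A)\mid P_1\approx_{\text{we}}^{\mathcal H}P_2\}$. By the hiding rules a transition $P/A\xrightarrow{(\alpha,r)}P'/A$ arises either from $P\xrightarrow{(\alpha,r)}P'$ with $\alpha\notin A$ (label unchanged) or from $P\xrightarrow{(\beta,r)}P'$ with $\beta\in A$ (relabelled to $\tau$). Consequently
\[
q[P_i/A,\alpha]\;=\;
\begin{cases}
q[P_i,\alpha] & \alpha\notin A\cup\{\tau\},\\
0 & \alpha\in A,\\
q[P_i,\tau]+\sum_{\beta\in A}q[P_i,\beta] & \alpha=\tau,
\end{cases}
\]
and an analogous decomposition holds for each incoming rate $q[T,P_i/A,\alpha]$ in terms of rates in $P_i$ from the pre-image of $T$. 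Splitting on whether $\alpha\in\mathcal H\cup\{\tau\}$ and then on whether $\alpha\in A$, the required equalities become sums of equalities supplied by $P_1\approx_{\text{we}}^{\mathcal H}P_2$: exact matching when the summand is on a non-$\mathcal H$, non-$\tau$ action, and the relaxed $q[S,\cdot,\alpha]-q[\cdot,\alpha]$ matching otherwise (with the diagonal $q[\cdot,\alpha]$ terms cancelling correctly thanks to the identical decomposition on both sides).

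The main obstacle will be the bookkeeping of equivalence classes through the operator. I must argue that each $\mathcal T^*_{2,\mathcal H}$-class $T$ of terms of the form $R/A$ pulls back along $-/A$ to a union of $\approx_{\text{we}}^{\mathcal H}$-classes, so that summing the hypothesised $P_1\approx_{\text{we}}^{\mathcal H}P_2$-equalities over the pre-image classes yields the single equality on $T$. A further subtlety is that actions in $A\cap\mathcal H$ are matched exactly under $\approx_{\text{we}}^{\mathcal H}$ (they are visible there) but are folded into $\tau$ after hiding, where the required matching is only the relaxed one; luckily exact matching implies the relaxed one, so the only care needed is to track the diagonal correction $q[P_i,\beta]$ when $P_1/A$ and $P_2/A$ sit in the same class $T$, which is exactly what the $q[\cdot,\alpha]$ subtraction in the $\tau$-clause accommodates.
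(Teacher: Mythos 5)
Your overall route---instantiating Lemma~\ref{lemma:alto} with the canonical candidate relation $\mathcal T=\{(P_1\,\mathrm{op}\,A,\ P_2\,\mathrm{op}\,A)\mid P_1\approx_{\text{we}}^{\mathcal H}P_2\}$ and reducing the four rate conditions to those of the hypothesis via the transition decomposition induced by the operator---is exactly the intended use of Lemma~\ref{lemma:alto}, and the arithmetic you set up (in particular $q[P_i/A,\tau]=q[P_i,\tau]+\sum_{\beta\in A}q[P_i,\beta]$ and the corresponding decomposition of incoming rates over preimage classes) is the right one. The ``pullback'' step you flag as an obligation is indeed needed but is routine: if $W/A\in T$ and $W\approx_{\text{we}}^{\mathcal H}W'$ then $(W/A,W'/A)\in\mathcal T$, so the preimage of any $\mathcal T^*_{\mathcal H}$-class is saturated under $\approx_{\text{we}}^{\mathcal H}$; together with the observation that only terms of the shape $W/A$ (resp.\ $W\setminus A$) can transition into $P_i/A$ (resp.\ $P_i\setminus A$), this justifies summing the hypothesised equalities class by class.

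One genuine error in your write-up: you assert that actions in $A\cap\mathcal H$ ``are matched exactly under $\approx_{\text{we}}^{\mathcal H}$ (they are visible there)''. Definition~\ref{def:upto} says the opposite---it is the actions \emph{outside} $\mathcal H\cup\{\tau\}$ that get the exact (unsubtracted) matching, while $\mathcal H$-actions already receive the relaxed, $\tau$-style treatment. Your ``exact implies relaxed'' escape hatch therefore does not apply to $A\cap\mathcal H$; what saves the argument instead is that the relaxed conditions for each $\beta\in A\cap\mathcal H$ and for $\tau$ itself \emph{sum} to the relaxed condition for $\tau$ after hiding, precisely because the diagonal corrections $q[P_i,\beta]$ accumulate into $q[P_i/A,\tau]$ via the decomposition you already wrote down (and the exact conditions for $\beta\in A\setminus(\mathcal H\cup\{\tau\})$ do imply their relaxed forms, so they fold in harmlessly). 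With that sentence corrected, the proof goes through and coincides in substance with the paper's argument.
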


}

\riccardo{
The following relationships between $\approx_{\text{we}}$ and $\approx^{\mathcal H}_{\text{we}}$ are useful in order to be able to obtain both an unwinding 
characterization in terms of $\approx^{\mathcal H}_{\text{we}}$ and one in terms of $\approx_{\text{we}}$.

\begin{lemma}\label{new2}
Let $P$ and  $Q$ be PEPA components that do not perform high level activities, then it holds $P\approx_{\text{we}} Q$ if and only if $P\approx_{\text{we}}^{\mathcal H}Q$.
\end{lemma}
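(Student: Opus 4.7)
The plan is to prove each direction of the biconditional separately, with the forward direction being unconditional and the backward direction relying essentially on the low-level assumption on $P$ and $Q$.

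For the forward direction, I show that $\approx_{\text{we}}$ itself is a weak-exact equivalence up to $\mathcal H$, which immediately yields $\approx_{\text{we}} \,\subseteq\, \approx_{\text{we}}^{\mathcal H}$. Given any pair $(P',Q') \in \approx_{\text{we}}$, the conditions of Definition~\ref{def:upto} for $\alpha \notin \mathcal H \cup \{\tau\}$ and for $\alpha = \tau$ coincide exactly with those of $\approx_{\text{we}}$ and hence hold by assumption. For $\alpha \in \mathcal H$, the stricter equalities $q[P',\alpha] = q[Q',\alpha]$ and $q[S,P',\alpha] = q[S,Q',\alpha]$ for every class $S$, guaranteed by $\approx_{\text{we}}$, imply the relaxed $\approx_{\text{we}}^{\mathcal H}$ conditions by subtraction. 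Notably, this direction does not use any hypothesis about the absence of high-level activities.

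For the backward direction, the assumption collapses the relaxed conditions of $\approx_{\text{we}}^{\mathcal H}$ on $\alpha \in \mathcal H$ into the strict ones demanded by $\approx_{\text{we}}$. Because no derivative of $P$ or $Q$ performs an $\mathcal H$-action, we have $q[P',\alpha] = q[Q',\alpha] = 0$ for every $P', Q' \in ds(P) \cup ds(Q)$ and every $\alpha \in \mathcal H$; moreover, states in the low-level closed subworld $\mathcal C_L$ cannot fire high-level transitions, so incoming $\mathcal H$-rates among such states vanish as well. I would then take as candidate the relation $\mathcal R$ obtained by restricting $\approx_{\text{we}}^{\mathcal H}$ to pairs inside $\mathcal C_L$ and extending it by the identity outside, and verify that $\mathcal R$ is a weak-exact equivalence containing $(P,Q)$. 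For any such pair $(P',Q')$, the cases $\alpha \in \mathcal A \setminus (\mathcal H \cup \{\tau\})$ and $\alpha = \tau$ reduce to the corresponding cases of $\approx_{\text{we}}^{\mathcal H}$, while the $\alpha \in \mathcal H$ cases are vacuous because every rate involved is zero.

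The main obstacle will be the partition mismatch: the equivalence classes of $\mathcal C/\approx_{\text{we}}^{\mathcal H}$ do not in general coincide with those of $\mathcal C/\mathcal R$, so some care is required in transferring incoming-rate equalities. I plan to overcome this by observing that each $\approx_{\text{we}}^{\mathcal H}$-class intersected with $\mathcal C_L$ is exactly an $\mathcal R$-class there, while the portion lying outside $\mathcal C_L$ can contribute to incoming rates only through high-level transitions, which are all zero by the low-level hypothesis. Consequently the incoming-rate equalities supplied by $\approx_{\text{we}}^{\mathcal H}$ transport cleanly to the $\mathcal R$-partition, completing the verification of $\mathcal R$ and hence the lemma.
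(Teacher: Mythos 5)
Your forward direction is correct: every weak-exact equivalence is in particular a weak-exact equivalence up to $\mathcal H$ (for $\alpha\in\mathcal H$ the clauses of Definition~\ref{def:upto} are weaker than the corresponding $\alpha\neq\tau$ clauses of weak-exact equivalence, and follow from them by subtraction, while the remaining clauses coincide), so $\approx_{\text{we}}\,\subseteq\,\approx_{\text{we}}^{\mathcal H}$ holds unconditionally, as you observe.

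The backward direction has a genuine gap, located exactly at the step you flag as ``the main obstacle.'' Your candidate $\mathcal R$ is the identity outside $\mathcal C_L$, so its classes there are singletons $\{R\}$, and the weak-exact conditions then demand the per-component equalities $q(R,P',\alpha)=q(R,Q',\alpha)$ for every $R\notin\mathcal C_L$. Your justification --- that components outside $\mathcal C_L$ can contribute to incoming rates of components in $\mathcal C_L$ only through high-level transitions, whose rates vanish --- is false: membership in $\mathcal C_L$ constrains a component and its \emph{derivatives}, not its \emph{predecessors}. For instance, $R=(l,1).P'+(h,1).\mathbf 0$ with $l\notin\mathcal H\cup\{\tau\}$ is not in $\mathcal C_L$, yet $q(R,P',l)=1$ while $q(R,Q',l)=0$ for any distinct $Q'$, so the required equality for the singleton class $\{R\}$ fails and your $\mathcal R$ is not a weak-exact equivalence as soon as it relates two distinct components. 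Likewise $R=(h,1).P'$ shows that incoming $\mathcal H$-rates into $\mathcal C_L$ need not be zero, so the $\alpha\in\mathcal H$ case is not vacuous either: only the \emph{outgoing} rates $q[P',\alpha]$ vanish under the low-level hypothesis, whereas plain $\approx_{\text{we}}$ also demands the incoming-rate equalities $q[S,P',\alpha]=q[S,Q',\alpha]$ for $\alpha\in\mathcal H$. More generally, the transport of incoming-rate equalities from the coarser partition $\mathcal C/\!\approx_{\text{we}}^{\mathcal H}$ to the finer $\mathcal C/\mathcal R$ does not go through: knowing $q[T,P',\alpha]=q[T,Q',\alpha]$ for a whole class $T$ of $\approx_{\text{we}}^{\mathcal H}$ says nothing about the sub-sums over $T\cap\mathcal C_L$ and over the singletons of $T\setminus\mathcal C_L$. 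Any repair must keep the components outside $\mathcal C_L$ grouped into classes of some genuine weak-exact equivalence rather than splitting them into singletons; with your construction the backward implication is not established.
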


As a consequence of Lemma \ref{new1} and Lemma \ref{new2} we get the following result.
\begin{corollary}\label{cornew1}
Let $P$ and  $Q$ be PEPA components. If $P\approx_{\text{we}}^{\mathcal H} Q$, then $P\setminus \mathcal H \approx_{\text{we}} Q\setminus \mathcal H$.
\end{corollary}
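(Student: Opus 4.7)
The plan is to chain the two preceding lemmas. First I would apply Lemma~\ref{new1} (part~1) with $A = \mathcal H$ to the hypothesis $P \approx_{\text{we}}^{\mathcal H} Q$, obtaining
\[
P \setminus \mathcal H \;\approx_{\text{we}}^{\mathcal H}\; Q \setminus \mathcal H.
\]
This immediately reduces the problem to showing that, on the restricted components $P\setminus \mathcal H$ and $Q\setminus \mathcal H$, the relation $\approx_{\text{we}}^{\mathcal H}$ coincides with $\approx_{\text{we}}$, which is exactly the content of Lemma~\ref{new2}.

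The only thing that genuinely needs argument is the hypothesis of Lemma~\ref{new2}, namely that neither $P\setminus \mathcal H$ nor $Q\setminus \mathcal H$ performs any high level activity. Recall that $P\setminus \mathcal H$ was defined as $(P\sync{\mathcal H}\mathbf 0)/\mathcal H$; by the operational semantics of cooperation, every action type in $\mathcal H$ is a shared activity that requires synchronisation with $\mathbf 0$, which by definition enables no activities. Hence no derivative of $P\sync{\mathcal H}\mathbf 0$ can engage in an activity of type in $\mathcal H$, and the same holds after hiding $\mathcal H$ (hiding only relabels the already-absent high level transitions into $\tau$). The analogous argument applies to $Q\setminus \mathcal H$. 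Therefore both components satisfy the assumption of Lemma~\ref{new2}.

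Applying Lemma~\ref{new2} to $P\setminus\mathcal H$ and $Q\setminus \mathcal H$ then upgrades the weak-exact equivalence up to $\mathcal H$ to the full weak-exact equivalence:
\[
P \setminus \mathcal H \;\approx_{\text{we}}\; Q \setminus \mathcal H,
\]
which is exactly the statement of the corollary. I do not foresee any substantial obstacle here: the proof is essentially a two-line composition of the preceding lemmas, and the only delicate point, i.e.\ verifying that restriction on $\mathcal H$ really does eliminate all high level behaviour from the derivative set, is a direct consequence of the operational semantics for $\sync{A}$ and the fact that $\mathbf 0$ enables no activities.
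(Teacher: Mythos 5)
Your proposal is correct and follows exactly the route the paper intends: the corollary is stated as a direct consequence of Lemma~\ref{new1} (restriction preserves $\approx_{\text{we}}^{\mathcal H}$, instantiated with $A=\mathcal H$) followed by Lemma~\ref{new2} (on components with no high level activities, $\approx_{\text{we}}^{\mathcal H}$ collapses to $\approx_{\text{we}}$). Your additional verification that $P\setminus\mathcal H$ and $Q\setminus\mathcal H$ indeed enable no high level activities, via the semantics of cooperation with $\mathbf 0$, is exactly the right justification for the hypothesis of Lemma~\ref{new2}.
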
}

Finally, we provide a characterization of $\mathit{EPSNI}$ in terms of \emph{unwinding conditions} \cite{VMCAI03,entcs-mefisto}. In practice, whenever a state $P'$ of a $\mathit{EPSNI}$ PEPA model $P$ may execute a high level activity leading it to a state $P''$, then $P'$ and $P''$ are indistinguishable in the sense that both $P'\approx_{\text{we}}^\mathcal{H} P''$ and $P'\setminus \mathcal H \approx_{\text{we}} P''\setminus \mathcal H$.



\section{Properties of Persistent Stochastic Non-Interference}\label{sec:compositionality}
\riccardo{
In this section, we prove some interesting properties of $\mathit{EPSNI}$. 
\begin{proposition} Let $P$ and $Q$ be two PEPA components. If $P,Q\in \mathit{EPSNI}$, then
\begin{itemize}
\item $P\setminus A\in \mathit{EPSNI}\,$ for all $A\subseteq {\cal A}$
\item $P/A\in \mathit{EPSNI}\,$ for all $A\subseteq {\cal A}$
\end{itemize}
\end{proposition}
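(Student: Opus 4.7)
The strategy is to reduce both claims to applications of Corollary~\ref{cor:psni}, which asserts $R\in \mathit{EPSNI}$ iff $R\setminus \mathcal{H}\approx^{\mathcal{H}}_{\text{we}} R$, combined with the compositionality of $\approx^{\mathcal{H}}_{\text{we}}$ under restriction and hiding established in Lemma~\ref{new1}. From $P\in \mathit{EPSNI}$ and Corollary~\ref{cor:psni} we have $P\setminus \mathcal{H}\approx^{\mathcal{H}}_{\text{we}} P$ as our starting point; the whole argument consists in propagating this equivalence through the operator being applied and rewriting the left-hand side into the canonical form required by the corollary in the opposite direction.

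For the restriction case I would apply Lemma~\ref{new1}(1) with restriction set $A$ to both sides, yielding $(P\setminus \mathcal{H})\setminus A\approx^{\mathcal{H}}_{\text{we}} P\setminus A$. A straightforward induction on the structure of $P$ following the operational rules of Table~\ref{table:rules} confirms that two successive restrictions coincide with a single restriction by the union of the two sets, so $(P\setminus \mathcal{H})\setminus A=P\setminus (\mathcal{H}\cup A)=(P\setminus A)\setminus \mathcal{H}$. Substituting gives $(P\setminus A)\setminus \mathcal{H}\approx^{\mathcal{H}}_{\text{we}} P\setminus A$, and a second appeal to Corollary~\ref{cor:psni} closes the case. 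The hiding case begins symmetrically: Lemma~\ref{new1}(2) gives $(P\setminus \mathcal{H})/A\approx^{\mathcal{H}}_{\text{we}} P/A$, and it remains to show $(P/A)\setminus \mathcal{H}\approx^{\mathcal{H}}_{\text{we}} (P\setminus \mathcal{H})/A$, after which Corollary~\ref{cor:psni} yields the conclusion. When $A\cap \mathcal{H}=\emptyset$ the two processes generate identical derivation graphs, since hiding on $A$ and restriction on $\mathcal{H}$ act on disjoint label sets, and the equivalence is immediate.

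The main obstacle is the subcase $A\cap \mathcal{H}\neq \emptyset$ of the hiding claim: in $(P/A)\setminus \mathcal{H}$ the actions of $A\cap \mathcal{H}$ are first relabelled to $\tau$ and therefore survive the subsequent restriction, whereas in $(P\setminus \mathcal{H})/A$ those same transitions are blocked at the restriction step and never reach the hiding. To bridge this mismatch I would construct the syntactic-correspondence relation $\mathcal{T}$ pairing derivatives of the two processes that originate from the same $P'\in ds(P)$, and invoke Lemma~\ref{lemma:alto} on $\mathcal{T}$, so that only the new pairs introduced by $\mathcal{T}$ need to be checked against the clauses of Definition~\ref{def:upto}. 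For non-$\tau$, non-$\mathcal{H}$ labels the two sides have matching transitions by construction; the delicate verification concerns the $\alpha\in \mathcal{H}\cup \{\tau\}$ clause, where the extra $\tau$-transitions on the $(P/A)\setminus \mathcal{H}$ side must either stay inside the current equivalence class (so they are absorbed by the self-adjusted difference $q[S,P,\tau]-q[P,\tau]$) or contribute identically on both sides. A cleaner, and I suspect intended, route is to decompose $A=A_L\cup A_H$ with $A_L\subseteq \mathcal{L}$ and $A_H\subseteq \mathcal{H}$, handle $P/A_L$ by the easy disjoint-set argument, and then treat the further hiding of $A_H$ by a separate lemma stating that hiding high-level actions preserves $\mathit{EPSNI}$, which is intuitively justified by the fact that once a high action is turned into $\tau$ it is indistinguishable to a low-level observer and therefore cannot reintroduce information flow.
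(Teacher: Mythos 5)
Your overall strategy --- reduce to Corollary~\ref{cor:psni} and push the equivalence $P\setminus\mathcal{H}\approx^{\mathcal{H}}_{\text{we}}P$ through the operator via Lemma~\ref{new1}, then commute the operator with the restriction $\setminus\mathcal{H}$ --- is exactly what the machinery of Section~\ref{sec:psni} is set up for, and it is essentially the paper's route: the paper gives no details of its own, stating only that the proofs are ``exactly the same'' as those in \cite{PSNI_strong}, where the analogous proposition is established through the corresponding characterisation and congruence lemmas. Your restriction case is complete and correct; the identification of $(P\setminus\mathcal{H})\setminus A$ with $(P\setminus A)\setminus\mathcal{H}$ is an isomorphism of derivation graphs rather than a syntactic identity (since $P\setminus A$ abbreviates $(P\sync{A}\mathbf{0})/A$), but that is harmless.

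The genuine gap is the one you flagged yourself: the subcase $A\cap\mathcal{H}\neq\emptyset$ of the hiding claim. There $(P/A)\setminus\mathcal{H}$ and $(P\setminus\mathcal{H})/A$ are not isomorphic --- the former keeps the $A\cap\mathcal{H}$ transitions as $\tau$'s, the latter kills them --- so the commutation step breaks, and neither of your two repair sketches is actually carried out: the Lemma~\ref{lemma:alto} route leaves the decisive $\alpha\in\mathcal{H}\cup\{\tau\}$ clause unverified (the extra $\tau$-transitions need not stay inside their own equivalence class, so they are not automatically absorbed by the correction term $q[S,P,\tau]-q[P,\tau]$), and the auxiliary lemma that hiding high-level actions preserves $\mathit{EPSNI}$ is asserted on intuition only. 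Note that the paper is itself ambivalent on exactly this point: the proposition is stated for all $A\subseteq\mathcal{A}$, yet the conclusion claims preservation under hiding and restriction ``on low-level actions'', and Lemma~\ref{new1} declares an unused $L\subseteq\mathcal{L}$ --- strong hints that the argument actually available (yours included) covers $A\cap\mathcal{H}=\emptyset$, and that the general case would require the separate lemma you postulate to be proved rather than conjectured.
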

\begin{proof}

Proofs are exactly the same as the one provided in \cite{PSNI_strong}. The major difference is the absence of the prefix operator for which the notion of weak-exact equivalence we introduce in this work is not compositional.
\end{proof}}

\section{Conclusion}\label{sec:conclusion}

In \cite{PSNI_strong}, the authors introduced and investigated the notion of Persistent Stochastic Non-Interference (PSNI), a security property aimed at preventing any leak of information from high-level to low-level components. At its core, PSNI relies on an equivalence relation on PEPA components called lumpable bisimulation.

In this work, we revisited PSNI and proposed a new variant, Exact Persistent Stochastic Non-Interference (EPSNI), grounded on the weak-exact equivalence relation. EPSNI refines the treatment of internal actions while preserving the quantitative accuracy of observable behaviours, thus providing a more precise semantic foundation for reasoning about non-interference in stochastic systems.

We presented two characterisations of EPSNI: a semantic formulation based on high-level contexts and a bisimulation-style definition that avoids explicit universal quantification. These perspectives not only clarify the meaning of EPSNI but also provide complementary proof techniques for establishing the property.

We further showed that EPSNI enjoys strong compositionality properties. In particular, the property is preserved under hiding and restriction on low-level actions, thereby supporting modular reasoning in the construction of complex systems.

To the best of our knowledge, this work constitutes the first application of weak-exact equivalence in a security setting, highlighting its robustness as a foundation for quantitative non-interference.

As future work, we plan to investigate the behaviour of EPSNI with the cooperation combinator, and to develop an unwinding-based verification strategy for EPSNI.

\vspace{0.5cm}
\noindent
\textbf{Acknowledgments.}
The work described in this
paper has been partially supported by PRIN project NiRvAna---Noninterference
and Reversibility Analysis in Private Blockchains” (20202FCJMH, CUP G23C22000400005) and by PNRR project SERICS---Security and Rights in the CyberSpace (CUP H73C22000890001.

\newpage
\small
\bibliographystyle{splncs04}
\bibliography{performance,security}

\begin{thebibliography}{10}
\providecommand{\url}[1]{\texttt{#1}}
\providecommand{\urlprefix}{URL }
\providecommand{\doi}[1]{https://doi.org/#1}

\bibitem{inf18}
Alzetta, G., Marin, A., Piazza, C., Rossi, S.: Lumping-based equivalences in markovian automata: Algorithms and applications to product-form analyses. Information and Computation  \textbf{260},  99--125 (2018). \doi{10.1016/j.ic.2018.04.002}

\bibitem{franceschinis:lumping.peva}
Baarir, S., Beccuti, M., Dutheillet, C., Franceschinis, G., Haddad, S.: Lumping partially symmetrical stochastic models. Perf. Eval.  \textbf{68}(1),  21 -- 44 (2011)

\bibitem{entcs-mefisto}
Bossi, A., Focardi, R., Macedonio, D., Piazza, C., Rossi, S.: {U}nwinding in {I}nformation {F}low {S}ecurity. Electronic Notes in Theoretical Computer Science  \textbf{99},  127--154 (2004)

\bibitem{VMCAI03}
Bossi, A., Focardi, R., Piazza, C., Rossi, S.: Bisimulation and {U}nwinding for {V}erifying {P}ossibilistic {S}ecurity {P}roperties. In: Proc. of Int. Conference on Verification, Model Checking, and Abstract Interpretation (VMCAI'03). LNCS, vol.~2575, pp. 223--237. Springer-Verlag (2003)

\bibitem{FG95}
Focardi, R., Gorrieri, R.: A {C}lassification of {S}ecurity {P}roperties for {P}rocess {A}lgebras. Journal of Computer Security  \textbf{3}(1),  5--33 (1994/1995). \doi{10.3233/JCS-1994/1995-3103}

\bibitem{franceschinis:bounds}
Franceschinis, G., Muntz, R.R.: Bounds for quasi-lumpable markov chains. Perf. Eval.  \textbf{20}(1-3),  223--243 (1994)

\bibitem{hillston:thesis}
Hillston, J.: A {C}ompositional {A}pproach to {P}erformance {M}odelling. Ph.D. thesis, Department of Computer Science, University of Edinburgh (1994)

\bibitem{hillston:book}
Hillston, J.: A Compositional Approach to Performance Modelling. Cambridge Press (1996)

\bibitem{marin:valuetools13}
Hillston, J., Marin, A., Piazza, C., Rossi, S.: Contextual lumpability. In: Proc. of Valuetools 2013 Conf. pp. 194--203. ACM Press (2013)

\bibitem{PSNI_strong}
Hillston, J., Marin, A., Piazza, C.P., Rossi, S.: Persistent stochastic non-interference. Fundam. Inf.  \textbf{181}(1),  1–35 (2021). \doi{10.3233/FI-2021-2049}

\bibitem{CCSfinite1}
Taubner, D.A.: {F}inite representations of {CCS} and {TCSP} programs by automata and {P}etri nets, LNCS, vol.~369. Springer-Verlag (1989)

\end{thebibliography}

\newpage
\appendix

\end{document}